\newtheorem{theorem}{Theorem}
\newtheorem{corollary}{Corollary}
\begin{document}

\title{Decentralized Asynchronous Coded Caching Design and Performance Analysis in Fog Radio Access Networks}


\author{Yanxiang~Jiang,~\IEEEmembership{Senior~Member,~IEEE}, 
Wenlong~Huang, Mehdi~Bennis,~\IEEEmembership{Senior~Member,~IEEE}, and~Fu-Chun~Zheng,~\IEEEmembership{Senior~Member,~IEEE} 
\thanks{Manuscript received May 25, 2018, revised December 2, 2018, and \today.}
\thanks{This work was supported in part by
the Natural Science Foundation of Jiangsu Province under grant
BK20181264,
the Research Fund of the State Key Laboratory of
Integrated Services Networks (Xidian University) under grant ISN19-10,
the Research Fund of the Key Laboratory of Wireless Sensor Network $\&$ Communication (Shanghai Institute of Microsystem and Information Technology, Chinese Academy of Sciences) under grant 2017002,
the National Basic Research Program of China
(973 Program) under grant 2012CB316004,
a Shenzhen Municipality/HITSZ Start-Up Grant entitled ``Energy-Efficient Low-Latency Wireless Networks",
and the U.K. Engineering and Physical Sciences Research Council under Grant EP/K040685/2.
Part of this work has been presented at 2018 IEEE 88th Vehicular Technology Conference (VTC2018-Fall), Chicago, USA, August 2018.}
\thanks{Y. Jiang is with the National Mobile Communications Research Laboratory, Southeast University, Nanjing 210096, China,
the State Key Laboratory of Integrated Services Networks, Xidian University, Xi'an 710071, China, and the Key Laboratory of Wireless Sensor Network $\&$ Communication, Shanghai Institute of Microsystem and Information Technology,
Chinese Academy of Sciences, 865 Changning Road, Shanghai 200050, China (e-mail: yxjiang@seu.edu.cn).}
\thanks{W. Huang is with the National Mobile Communications Research Laboratory, Southeast University, Nanjing 210096, China (e-mail: ahhwl0514@163.com).}
\thanks{M. Bennis is with the Centre for Wireless Communications, University of Oulu, Oulu 90014, Finland (e-mail: mehdi.bennis@oulu.fi).}
\thanks{F. Zheng is with the School of Electronic and Information Engineering, Harbin Institute of Technology, Shenzhen 518055, China,
and the National Mobile Communications Research Laboratory, Southeast University, Nanjing 210096, China. (e-mail: fzheng@ieee.org).}
}

\maketitle

\begin{abstract}
In this paper, we investigate the problem of asynchronous coded caching in fog radio access networks (F-RANs). To minimize the fronthaul load, the encoding set collapsing rule and  encoding set partition method are proposed to establish the relationship between the coded-multicasting contents for asynchronous and synchronous coded caching. Furthermore, a decentralized asynchronous coded caching scheme is proposed, which provides asynchronous and synchronous transmission methods for different delay requirements.
The closed-form expression of the fronthaul load is established for the special case where the number of requests during each time slot is fixed,
and the upper and lower bounds of the fronthaul load are given for the general case where the number of requests during each time slot is random. The simulation results show that our proposed scheme can create considerable coded-multicasting opportunities in asynchronous request scenarios.
\end{abstract}

\begin{keywords}
Fog radio access networks, asynchronous coded caching, coded-multicasting, fronthaul load.
\end{keywords}

\section{Introduction}

With the rapid proliferation of smart devices and mobile application services, wireless networks have been suffering an unprecedented data traffic pressure in recent years, especially at peak-traffic moments.
Fog radio access networks (F-RANs), which can effectively reduce the data traffic pressure by placing popular contents closer to users, have been receiving significant attention from both industry and academia \cite{Bennis1, Bennis2}.
In F-RANs, fog access points (F-APs) are distributed at the edges and connected to the cloud server through fronthaul links.
F-APs can use edge computing and caching resources to provide users better quality of experience \cite{Zhang}.
Meanwhile, since a few popular content resources account for most of the traffic load,  edge caching has become instrumental in content delivery \cite{Bastug, Wang, add10}.
Moreover, coded caching was firstly proposed in \cite{Maddah-Ali1} and \cite{Maddah-Ali2} by encoding the delivered contents to further reduce the data traffic pressure.

The main idea of coded caching is that the contents stored in the caches  can be used to create coded-multicasting opportunities, such that a single coded-multicasting content transmitted by the cloud server can be useful to a large number of users simultaneously even though  they are not requesting the same content. In \cite{Maddah-Ali1}, Maddah-Ali and Niesen  proposed a centralized coded caching scheme, in which the centrally coordinated placement phase needs the knowledge of the number of active users in the delivery phase. A decentralized coded caching scheme was further proposed in \cite{Maddah-Ali2}, which achieves order-optimal memory-load tradeoff in the asymptotic regime with infinite file size.
However, when the file size is sub-exponential with respect to the number of users, the Maddah-Ali-Niesen's decentralized scheme in \cite{Maddah-Ali2} achieves at most a multiplicative gain of two over the conventional uncoded caching scheme\cite{shan}.
Focusing on the finite file size regime, the authors in \cite{Jin} proposed a decentralized random coded caching scheme and a partially decentralized sequential coded caching scheme, which outperform the Maddah-Ali-Niesen's decentralized scheme when the file size is not very large.

Furthermore,
the authors in \cite{Niesen1} presented  a strategy which partitions the file library into subsets of approximately uniform request probability and applies the strategy in \cite{Maddah-Ali2} to each subset.
In \cite{Hachem}, the authors considered the case where the entire content is divided into multiple different levels based on popularity and an information-theoretic outer bound was  developed.
A scheme consisting of a random popularity-based caching policy and chromatic-number index coding delivery was proposed in \cite{Ji1}, which was proven to be order optimal in terms of average rate.
In \cite{JZhang1}, the authors considered an arbitrary popularity distribution and  derived a new information-theoretical lower bound on the expected transmission rate of any coded caching schemes.
In \cite{Ji2},  the analysis in \cite{Ji1} was extended to the case where each user requests multiple files
 and an order-optimal delivery scheme was provided  based on local graph coloring, which achieved the order gain compared with the repeated application of the scheme in \cite{Maddah-Ali1}.
In \cite{Ji3}, the analysis was further extended to the case with distinct cache sizes and demand distributions, and  a
novel polynomial-time algorithm based on greedy local graph coloring was proposed, which can
recover a significant part of the multiplicative caching gain with the same content packetization.
In \cite{JZhang} and \cite{ Amiri}, more complex heterogeneous network settings for file sizes and cache capacities were studied, respectively.
In \cite{Pedarsani}, an online coded caching scheme  was proposed, which  updates cache contents by evicting some old file parts and replacing them with randomly chosen parts of the newly delivered files.
{
In \cite{add1}, the authors reformulated the centralized coded caching problem as designing a corresponding placement delivery array
and proposed two new schemes from this perspective which can significantly reduce the rate compared with the uncoded caching schemes.
In \cite{add2}, the authors 
viewed the centralized coded caching problem in a hypergraph perspective
and showed that designing a feasible placement delivery array is equivalent to constructing a linear hypergraph in extreme graph theory.
In \cite{add3}, the authors proposed coded caching schemes based on combinatorial structures called resolvable designs, which can be obtained in a natural manner from linear block codes whose generator matrices possess certain rank properties,
and obtained several schemes with subpacketization levels substantially lower than the basic scheme at the cost of an increased rate.
In \cite{add4}, improved lower bounds on the required rate for the coded
caching problem was developed and it was demonstrated that the computation of this lower bound can be posed as a combinatorial labeling problem on a directed tree.
In \cite{add5}, a coded prefetching and the corresponding delivery strategy was proposed, which relies on a combination of rank metric codes and maximum distance separable (MDS) codes in a non-binary finite field.
In \cite{add6}, a novel centralized coded caching scheme was proposed that approaches the rate-memory region achieved by the scheme in \cite{add5} as the number of users in the system increases, which only requires a finite field of $2^2$.
Moreover, instead of relying on the existence of some valid code, an explicit combinatorial construction of the caching scheme was provided.
In \cite{add7}, the authors proposed a connection between the uncoded prefetching scheme proposed by Maddah Ali and Niesen and the coded prefetching scheme in \cite{add5}.
The new general coding scheme was then presented and analyzed rigorously, which yields a new inner bound to the memory-rate tradeoff for the caching problem.
}

All the above schemes in \cite{Maddah-Ali1, Maddah-Ali2, Niesen1, Hachem, Ji1, Ji2, Ji3, Pedarsani, Jin, Amiri, JZhang1, shan, JZhang, add1, add2, add3, add4, add5, add6, add7} considered the coded caching problem for the case in which user requests are  synchronous, i.e., synchronous coded caching. However,
user requests for contents are typically asynchronous in reality \cite{Maddah-Ali3}.
The asynchronous request case was first mentioned in \cite{Maddah-Ali2}, and the authors applied their proposed decentralized synchronous coded caching scheme to an asynchronous request scenario in a simple way.
{In \cite{Niesen2}, the delay sensitive coded caching problem was first studied and the situation whereby each asynchronous request has a specific deadline was considered. Then, a computationally efficient caching scheme  that exploits coded-multicasting  opportunities was developed  subject to the delivery-delay constraint.}
In \cite{Ghasemi},
the authors proposed a linear programming  formulation for the offline case that the server knows the arrival time before starting transmission. As for the online case that user requests are revealed to the server over time, they considered the situation that users do not have deadlines but wish to minimize the overall completion time.
{In \cite{add8}, a centralized coded joint pushing and caching (C-JPC) method with asynchronous user requests was proposed to minimize the network traffic by jointly determining when and which data packets are to be pushed and whether they should be cached.
Optimal offline and online C-JPC policies for noncausal and causal request delay information were obtained by solving optimization problems.
Fountain coded caching (FCC) and generalized coded caching (GCC) methods were further proposed to give sub-optimal policies with low complexity.
The authors analyzed the bounds on the optimal traffic volume and proved that the FCC and GCC methods achieve optimal or near-optimal traffic volumes in some special cases.}

Motivated by the aforementioned discussions, it is important to study the coded caching problem when  user requests are  asynchronous, i.e.,   asynchronous coded caching.
In view of this, we consider the online case with a given maximum request delay to reduce the worst-case load of the fronthaul links in F-RANs.
Our main contributions are summarized below.

\begin{enumerate}
\item

 We propose an encoding set collapsing rule to establish the relationship between the coded-multicasting contents in asynchronous and synchronous coded caching. Furthermore, we propose an encoding set partition method, which can create {considerable coded-multicasting opportunities} while the delay of each user is no more than the given maximum request delay.

\item

We propose a decentralized asynchronous coded caching scheme, which can  exploit the created coded-multicasting opportunities effectively.
Our proposed scheme is applicable for various asynchronous request scenarios by  providing asynchronous and synchronous transmission methods, which can be chosen according to different  delay requirements.

\item

We derive the closed-form expression of the fronthaul load for  our proposed scheme with the special case and establish the upper and lower bounds of the fronthaul load for our proposed scheme with the general case.
  We show that the fronthaul load using our proposed scheme is at most a constant factor larger than that of the  Maddah-Ali-Niesen's decentralized scheme.

\item

We validate our theoretical results by using computer simulations, which show
that our proposed scheme can create considerable coded-multicasting opportunities  in asynchronous request scenarios, and   the maximum request delay can be adjusted flexibly to achieve the load-delay tradeoff.

\end{enumerate}

The rest of this paper is organized as follows. In Section II,  the system model is introduced. Our proposed asynchronous coded caching scheme is presented in Section III. The performance analysis of our proposed scheme is given in Section IV.
In Section V, simulations results are shown. Final conclusions are drawn in Section VI.

\section{System Model}

Consider the F-RAN as shown in Fig. \ref{fran} where there are  $K$ F-APs and each F-AP serves multiple users.\footnote{{The network setting here is similar to that in \cite{Maddah-Ali1}.
However, there exist some difference between them.
In our network setting, users are connected through F-APs to the cloud server, where an F-AP can store large amount of local data and serve multiple users.
Actually, the F-AP can be seen as a sub-server and a request relay station, which can increase the satisfaction of users, improve the efficiency of requests, and reduce the overload of the cloud server.}}
Assume that the users request contents asynchronously during the time interval $\left( {0,T} \right]$.
Let ${\cal K} = \left\{ {1,2,\ldots,k,\ldots,K} \right\}$  denote the index set of the considered $K$ F-APs.
The cloud server has access to a content library of $N$  files, denoted by ${W_1},{W_2},\ldots,{W_N}$.
Let ${\cal N} = \left\{ {1,2,\ldots,n,\ldots,N} \right\}$  denote the index set of the $N$ files with $N \ge K$.
Assume that the size of each file  is $F$  bits and the files in the content library have a uniform popularity distribution.
For each F-AP, only one of its served users requests one file
during the time interval $\left( {0,T} \right]$, while the F-AP informs the cloud server of the request immediately.
For description convenience, we say that $K$ F-APs request contents asynchronously during the considered time interval $\left( {0,T} \right]$, where each F-AP only requests one file.\footnote{{
Multiple requests can also be handled by using our proposed scheme.
Assume each F-AP can request multiple different files during the considered time interval $(0,T]$.
Let $I_{\text{max}}$ denote the maximum allowed number of requested files by one F-AP.
Let $d_{k,i}$ denote the index of the $i$th requested file  by F-AP $k$ during $(0,T]$ for $k\in \cal K$, and $ d_{k,i}= \varnothing $ if F-AP $k$ does not request any file for the $i$th request.
Let $D_1=\left\{d_{1,1},d_{2,1}, \ldots,d_{K,1} \right\}, D_2=\left\{d_{1,2},d_{2,2}, \ldots,d_{K,2} \right\},\ldots,{D_{I_\text{max}}=\left\{d_{1,I_\text{max}},d_{2,I_\text{max}}, \ldots,d_{K,I_\text{max}} \right\}}$.
By using our proposed scheme for $D_1, D_2,\ldots, D_{I_\text{max}}$ parallelly and separately, the multiple-request case can then be handled.}}
 Each F-AP has an isolated (normalized by $F$) cache size $M$ with $0 < M < N$.

\begin{figure}[!t]
\centering 
\includegraphics[width=0.45\textwidth]{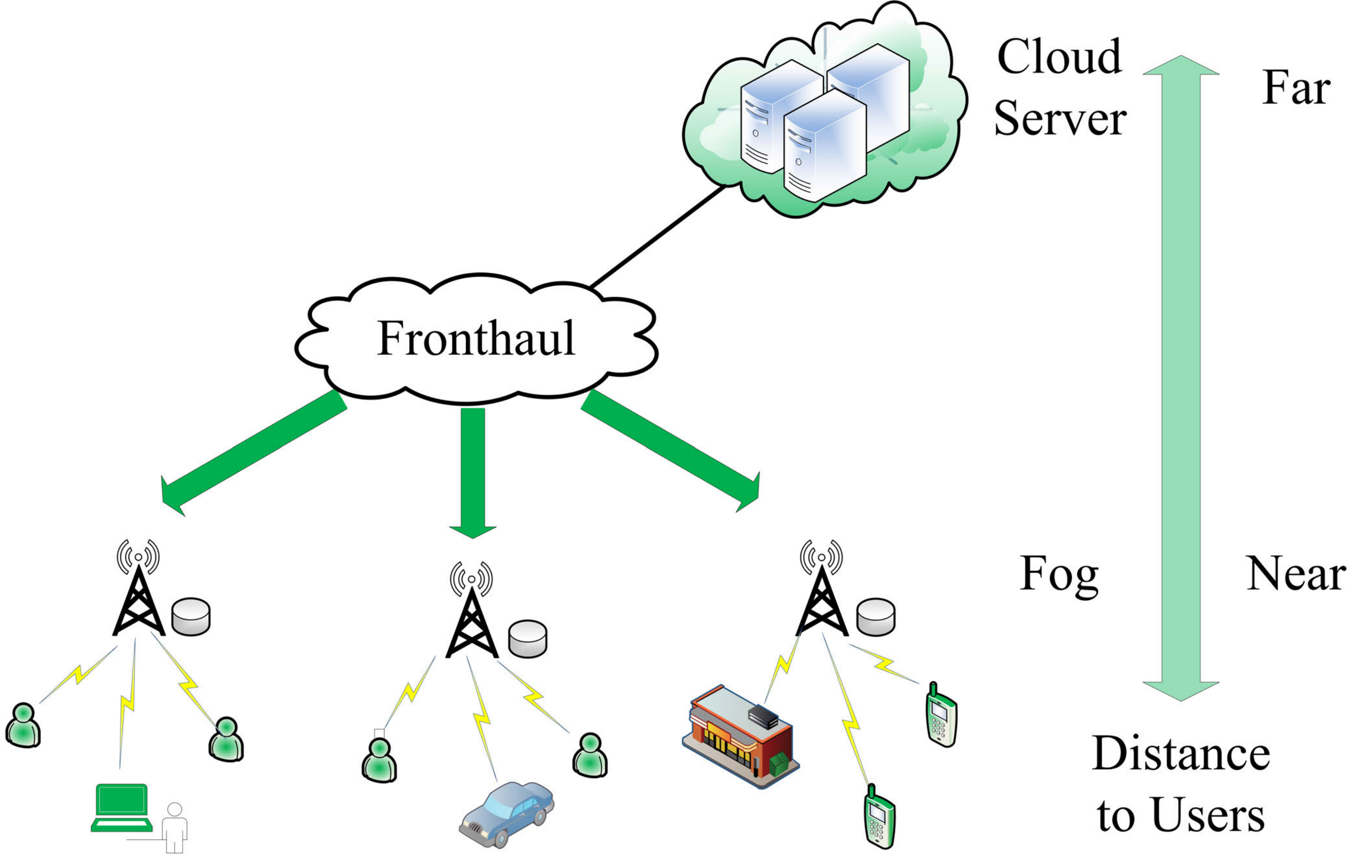}
\caption{Illustration of the asynchronous coded caching scenario in the F-RAN.} \label{fran}
\end{figure}

In the placement phase, the F-APs are given access to the content library.
 By using the same setting as in \cite{Maddah-Ali2}, F-AP $k$ is able to store its cache content $Z_k$ from the content library independently from the other F-APs, i.e., in a decentralized manner.  Let $\phi {}_k$ denote the caching function  of F-AP $k $, which maps the content library into the corresponding cache content as follows
 \begin{equation}
 {Z_k} = \phi {}_k\left( {{W_1},{W_2},\ldots,{W_N}} \right).
 \end{equation}
It can be readily seen that the size of ${Z_k}$ is  $MF$ bits.

In the delivery phase,  the cache contents of all the  F-APs are first communicated to  the cloud server, which are  then noted as  cache records at the cloud server.
Without loss of generality, assume that
the time interval  $\left( {0,T} \right]$ is divided into $B$ time slots with $B \ge 2$.
Let $\Delta t = {T \mathord{\left/
 {\vphantom {T B}} \right.
 \kern-\nulldelimiterspace} B}$ denote the time duration of each time slot.
Then, time
slot $b \in \left\{ {1,2,\ldots,B} \right\}$  represents the time interval $\left( {\left( {b - 1} \right)\Delta t,b\Delta t} \right]$.
Let ${{\cal U}_b} \subseteq {\cal K}$  denote the index set of the F-APs whose requests arrive during time slot  $b$ with ${{\cal U}_b} \ne \varnothing $.
Assume that the cloud server is informed of the requests of the F-APs in ${\cal U}_b$ during time slot $b$, which are processed in a unified manner, i.e., the cloud server transmits the coded-multicasting
content to all the $K$ F-APs through the fronthaul links at the end of each time slot for the online case.
Suppose that the maximum request delay that it takes for an F-AP to recover its requested file is $\Delta b \in \left\{ {1,2,\ldots,B} \right\}$ time slots.
In this paper, we do not consider the time that it takes for the cloud server to transmit the corresponding contents to the F-APs and the time that it takes for each F-AP to transmit the recovered file to the served user. Then, the cloud server should fulfill the requests of the F-APs in ${\cal U}_b$ by the end of the time slot $b + \Delta b - 1$.

Let  ${d_k} \in {\cal N}$ denote the index of the file requested by  F-AP $k$ during $\left( {0,T} \right]$, and ${{\boldsymbol{d}}_b} \in {{\cal N}^{\left| {{{\cal U}_b}} \right|}}$ denote the request vector of the corresponding  F-APs in ${\cal U}_b$.
Let $\psi_b$ denote the encoding function of the cloud server at the end of time slot $b$, which maps the files ${W_1},{W_2},\ldots,{W_N}$, the cache contents ${Z_1},{Z_2},\ldots,{Z_K}$, and the requests ${{\boldsymbol{d}}_b}$
 to the coded-multicasting content as follows
 \begin{equation}
 {X_b} \buildrel \Delta \over = \psi_b \left( {{W_1}, {W_2}, \ldots,{W_N},{Z_1},{Z_2},\ldots,{Z_K},{{\boldsymbol{d}}_b}} \right).
 \end{equation}
 Let $\theta _{k} $ denote the decoding function of F-AP $k$, which maps the received  coded-multicasting contents {${X_b},{X_{b+1}}, \ldots,{X_{b+\Delta b -1}}$,} the cache content $Z_k$, and the request ${{d_k}}$ to the estimate of the requested file ${{W_{{d_k}}}}$ of F-AP $k$ as follows
 {
 \begin{equation}
 {\hat W_{d_k}} = {\theta _{k}}\left( {{X_b},{X_{b+1}}, \ldots, {X_{b+\Delta b -1}}},{Z_k},{d_k} \right).
 \end{equation}
 }
Each F-AP should be able to recover its requested file successfully from its cached content and the received coded-multicasting contents, and then transmit  it to the served user. For every large enough file size $F$, an asynchronous coded caching scheme is  feasible if and only if the worst-case propability of error over all the possible requests ${{\boldsymbol{d}}_1},{{\boldsymbol{d}}_2},\ldots,{{\boldsymbol{d}}_B}$
  satisfies the following condition
\begin{equation}
\mathop {\max }\limits_{{{\boldsymbol{d}}_1},{{\boldsymbol{d}}_2}, \ldots ,{{\boldsymbol{d}}_B}} \mathop {\max }\limits_{k \in \mathcal K} P\left( {{{\hat W}_{{d_k}}} \ne {W_{{d_k}}}} \right) < \varepsilon, \quad \varepsilon > 0.
\end{equation}

The  objective  of this paper is to  find a feasible asynchronous coded caching scheme to minimize the worst-case   normalized  fronthaul load (over all the possible requests ${{\boldsymbol{d}}_1},{{\boldsymbol{d}}_2},\ldots,{{\boldsymbol{d}}_B}$) in the delivery phase for a given maximum request delay \cite{add9}.

\section{Proposed
Decentralized Asynchronous Coded Caching Scheme}

In this section, we first introduce the encoding set collapsing rule.
Then, we show the encoding set partition method. Finally,
we present the proposed decentralized asynchronous coded caching scheme.

\subsection{The Proposed Encoding Set Collapsing Rule}

Asynchronous coded caching and synchronous coded caching are thought to be  under the same condition when their system parameters $M$, $K$, and $N$ are the same.
As the conventional synchronous coded caching scheme under the same condition, such as the  Maddah-Ali-Niesen's decentralized scheme,  ${\cal S} \subseteq {\cal K}$ for any $s = \left| {\cal S} \right| \in \mathcal K$ is called an encoding set if
a single coded-multicasting content can be useful to the F-APs in $\cal S$ simultaneously.
It can be  readily seen that the subset of $\cal S$ is also an encoding set.
In order to differentiate the same subfile in asynchronous  and synchronous coded caching,
let $W_{k,{\cal S}}^{\rm a}$ and $W_{k,{\cal S}}^{\rm s}$ denote the bits of the file requested by F-AP $k$ cached exclusively at the F-APs in ${\cal S}$ for asynchronous and synchronous coded caching, respectively.

Consider that the requests of the F-APs in ${{\cal K}}\backslash {{\cal U}_{{1}}}$  have not arrived yet during time slot $1$. Assume   ${{\cal U}_1} \cap {\cal S} \ne \varnothing $ and  the requests
in ${{\boldsymbol{d}}_1}$
should be fulfilled at the end of time slot 1.
The cloud server needs to transmit a coded-multicasting content which is useful to the F-APs in ${{\cal U}_1} \cap {\cal S}$ at the end of time slot 1.
Thus, we say that the encoding set $\cal S$ in synchronous coded caching collapses into a subset of $\cal S$, i.e., ${{\cal U}_1} \cap {\cal S}$, for transmitting the corresponding coded-multicasting content in asynchronous coded caching.
Recall that by applying the scheme in \cite{Maddah-Ali2}, the coded-multicasting content that the cloud server
transmits for $\cal S$ in synchronous coded caching is
${ \oplus _{k \in {\cal S}}}W_{k,{\cal S}\backslash \left\{ k \right\}}^{\rm s}$,
 where $ \oplus $  denotes the bitwise XOR operation.
Accordingly, the cloud server transmits
${ \oplus _{k \in \left( {{\cal S} \cap {{\cal U}_{{1}}}} \right)}}{W_{k,\left( {{\cal S} \cap {{\cal U}_{{1}}}} \right)\backslash \left\{ k \right\}}^{\rm a}}$ at the end of time slot 1.
According to the above discussions, it is obvious that there exists some relationship, called encoding set collapsing rule, between the coded-multicasting contents in asynchronous and synchronous coded caching.

{During each time slot, let ${\cal U}^\textrm y$ denote the index set of the F-APs  from which the requests have arrived and ${\cal U}^\textrm n = {\cal K} \backslash {{\cal U}^\textrm y} $.}
For any { ${{\cal S}^1} \subseteq {{\cal U}^\textrm y}$  and ${{\cal S}^2} \subseteq {{\cal U}^\textrm n}$ }, the encoding set { ${\cal S} = {{\cal S}^1} \cup {{\cal S}^2}$} in synchronous coded caching collapses into {${\cal S}^1$ }in asynchronous coded caching.
Accordingly,
  { ${ \oplus _{k \in \left( {{{\cal S}^1} \cup {{\cal S}^2}} \right)}}W_{k,\left( {{{\cal S}^1} \cup {{\cal S}^2}} \right)\backslash \left\{ k \right\}}^{\rm s} $} collapses into { ${ \oplus _{k \in {{\cal S}^1}}}{W_{k,\left( {{{\cal S}^1} \cup {{\cal S}^2}} \right)\backslash \left\{ k \right\}}^{\rm a}}$ }, which will be practically transmitted by the cloud server  at the end of {the current time slot} in asynchronous coded caching.

\emph{\textbf{Example 1:}} Assume that $N = 4$, $K = 4$, $M = 2$, $B = 4$,
$T = 4 \ \rm s$, $\Delta t = 1 \ \rm s$, $\Delta b = 2$, ${{\cal U}_b} = \left\{ b \right\}$ and $d_k = k$.
Consider the encoding set $\left\{ {1,2,3,4} \right\}$, and
the corresponding coded-multicasting content in synchronous coded caching is $W_{1,\left\{ {2,3,4} \right\}}^{\rm s} \oplus W_{2,\left\{ {1,3,4} \right\}}^{\rm s} \oplus W_{3,\left\{ {1,2,4} \right\}}^{\rm s} \oplus W_{4,\left\{ {1,2,3} \right\}}^{\rm s}$. Note that the request of F-AP 1 should be fulfilled by the end of time slot $2$.
 $W_{3,\left\{ {1,2,4} \right\}}^{\rm s}$  and  $W_{4,\left\{ {1,2,3} \right\}}^{\rm s}$ cannot be encoded together at the end of time slot $2$ because the requests of F-AP 3 and F-AP 4 have not arrived yet.
Correspondingly, $W_{1,\left\{ {2,3,4} \right\}}^{\rm s} \oplus W_{2,\left\{ {1,3,4} \right\}}^{\rm s} \oplus W_{3,\left\{ {1,2,4} \right\}}^{\rm s} \oplus W_{4,\left\{ {1,2,3} \right\}}^{\rm s}$ collapses into  $W_{1,\left\{ {2,3,4} \right\}}^{\rm a} \oplus W_{2,\left\{ {1,3,4} \right\}}^{\rm a} $, and it will be  practically transmitted at the end of time slot $2$.

\subsection{The Proposed Encoding Set Partition Method}

Utilizing our proposed encoding set collapsing rule, we now consider what contents are transmitted in asynchronous coded caching.
In order to fulfill the requests of the F-APs
in asynchronous coded caching, $\cal S$ may need to be collapsed into a subset of $\cal S$ many times to transmit the corresponding  content at the end of different time slots.
For a given $\Delta b$,  {the requests of the F-APs in 
${{\cal U}_{b - \Delta b + 1}}$
need to be fulfilled by the end of time slot $b$.}
Let ${{\cal U}^ {\rm a}} = \mathop  \cup \nolimits_{i = \max \left\{ {1,b - \Delta b + 1} \right\}}^b {{\cal U}_i}$ denote the index set of the active F-APs {during} time slot $b$.
Then, only the files requested by the F-APs in ${{\cal U}^ {\rm a}}$ can be encoded with each other, which means that $\cal S$ collapses into ${\cal S} \cap {{{\cal U}^ {\rm a}}}$.
Moreover,  minimizing the fronthaul load  is equivalent to partitioning $\cal S$ into the minimum number of nonoverlapping subsets for transmission in asynchronous coded caching.

\begin{algorithm}[!t]
\footnotesize
\caption{The proposed encoding set partition method}
\label{partitionMethod}
\begin{algorithmic}[1]

\STATE Initialize $i $, $\beta$, $\gamma$.
\WHILE{${\cal S} \ne \varnothing $}
\STATE $i =  i + 1$.
\WHILE{${{\cal U}_{\beta + 1}} \cap {\cal S} = \varnothing $}
\STATE $\beta =  \beta + 1$.
\ENDWHILE

\IF{$\gamma - \beta \ge  \Delta b$}
\STATE ${\cal S}_i  = {\cal S} \cap \left( {\mathop  \cup \nolimits_{b = \beta + 1}^{\beta + \Delta b} {{\cal U}_b}} \right)$,
\STATE $\beta = \beta + \Delta b$.
\ELSE
\STATE ${\cal S}_i  = {\cal S} \cap \left( {\mathop  \cup \nolimits_{b = \beta + 1}^\gamma {{\cal U}_b}} \right)$.
\ENDIF
\STATE ${\cal S} = {\cal S}\backslash {\cal S}_i$.
\ENDWHILE
\end{algorithmic}
\end{algorithm}

Let $\left( {\beta \Delta t,\gamma \Delta t} \right]$ denote the active time interval  of ${\cal S}$ if $\left( {\left( {\mathop  \cup \nolimits_{b = 1}^\beta {{\cal U}_b}} \right) \cup \left( {\mathop  \cup \nolimits_{b = \gamma + 1}^B {{\cal U}_b}} \right)} \right) \cap {\cal S} = \varnothing $ and ${{\cal U}_b} \cap {\cal S} \ne \varnothing$ for
$b = \beta  + 1$ and $ b = \gamma $, where
 $\beta$ and $\gamma$ are integers with $0 \le \beta  < \gamma \le {B}$.
Suppose that ${\cal S}$ is partitioned into ${\eta _{\cal S}}\left( {\Delta b} \right)$ subsets for a given $\Delta b$, where ${\eta _{\cal S}}\left( {\Delta b} \right)$ is a function of $\Delta b$.
Let ${\cal S}_i$ denote the $i$-th partitioned encoding subset.
The detailed encoding set partition method is presented in Algorithm \ref{partitionMethod}.

 \emph{\textbf{Example 2:}}
Consider the same setting as \emph{Example 1}.
 Focus on  ${\cal S} = \left\{ {1,3,4} \right\}$ with its active time interval $\left( {0,4} \right]$. First, partition $\cal S$ from time slot $1$, and assign
  ${\cal S} \cap \left( {\mathop  \cup \nolimits_{b = 1}^2 {{\cal U}_b}} \right) = \left\{ 1 \right\}$ to the first encoding subset ${\cal S}_1 $.
  Then, let ${\cal S} = {\cal S}\backslash {\cal S}_1 = \left\{ {3,4} \right\}$.
  Partition $\cal S$ from the earliest time slot where there is  at least one F-AP requesting contents in $\cal S$, i.e., time slot $3$, and assign
${\cal S} \cap \left( {\mathop  \cup \nolimits_{b = 3}^4 {{\cal U}_b}} \right) = \left\{ {3,4} \right\}$ to the second encoding subset ${\cal S}_2 $. Correspondingly, $\left\{ {1,3,4} \right\}$ is partitioned into ${\eta _{\cal S}}\left( {2} \right) = 2$ encoding subsets, i.e., $\left\{ 1 \right\}$ and $\left\{ {3,4} \right\}$.

\emph{Remark 1:} According to Algorithm \ref{partitionMethod}, if $\Delta {b_1} > \Delta {b_2}$, we have
\begin{equation}\label{partition}
{\eta _{\cal S}}\left( {\Delta {b_1}} \right) \le {\eta _{\cal S}}\left( {\Delta {b_2}} \right).
\end{equation}

\emph{Remark 2:} It is possible that there exist some time slots in  $\left( {\beta \Delta t,\gamma \Delta t} \right]$,  during which no F-APs request contents. Let  ${B_{\cal S}} \le B$  denote the  number of the time slots in the active time  interval $\left( {\beta \Delta t,\gamma \Delta t} \right]$. Then,  we have
 \begin{equation}
1 \le {\eta _{\cal S}}\left( {\Delta {b}} \right) \le \left\lceil {{{ B}_{\cal S}}/\Delta b} \right\rceil  \le \left\lceil {B/\Delta b} \right\rceil, \label{partM}
\end{equation}
where $\left\lceil \cdot \right\rceil$ denotes the ceil operation.
Specifically, ${\eta _{\cal S}}\left( {\Delta {b}} \right) = 1$ means that $\cal S$ will not be partitioned and the cloud server only needs to transmit a single coded-multicasting content, that is useful to all the F-APs in $\cal S$, without increasing extra fronthaul load in asynchronous
coded caching.
Moreover, it can be readily seen that  ${\eta _{\cal S}}\left( {\Delta {b}} \right) = 1$ for any $\cal S$ only  when $\Delta b = B$.

\subsection{The Proposed Asynchronous Coded Caching Scheme}

 According to the above discussions,
  we propose the following decentralized asynchronous coded caching scheme which exploits the encoding set collapsing rule and implements the  encoding set partition method.
 In the placement phase, each F-AP randomly selects  ${M}F/{N}$ bits of each file with uniform probability and fetch them to fill its cache, which is the same as  the Maddah-Ali-Niesen's decentralized synchronous coded caching scheme.
Note that the placement procedure does not require any coordination and can be operated in a decentralized manner. More specifically, our proposed scheme can operate in the placement phase with an unknown number of F-APs.
In the delivery phase, we propose the following asynchronous and synchronous transmission methods for the online case, which can be chosen by the cloud server.
Note that asynchronous or synchronous here means that the cloud server transmits the coded-multicasting contents asynchronously or synchronously.

\subsubsection{Asynchronous Transmission Method}

 When $\Delta b < B$,
 the asynchronous transmission method is chosen.
The requests  in {${{\boldsymbol{d}}_{b - \Delta b + 1}}$} need to be fulfilled by the end of time slot {$b$}
in order that sufficient  coded-multicasting opportunities can be created.
If $\Delta b > 1$, no contents {need to be} transmitted
at the end of  time slot $1, 2,\ldots ,\Delta b - 1$, {and}  only the  {corresponding} requests of {the active F-APs} 
need to be fulfilled by the cloud server at the end of the time slots between $\Delta b $ and $ B$.
 For description convenience, we say the subfile ${W_{k,{\cal S}}^{\rm a}}$ is of type $s$ {with $s=|S|$}.
  Thus, the cloud server transmits a single coded-multicasting content for the F-APs in $\cal S$ by encoding the subfiles of type $s-1$\cite{Maddah-Ali2}.
  Similarly, we also say the encoding set $\cal S$ is of type $s$.
  At the end of the current time slot, the cloud server firstly partitions each file $W_n$ into  nonoverlapping subfiles\cite{Maddah-Ali2},  whose sizes are calculated according to the updated cache records.

{During time slot $b - \Delta b + 1$, according to the proposed encoding set collapsing rule in Section III-A, we have: ${{\cal U}^ {\rm y}} = {{\cal U}_{b - \Delta b + 1}}$ and ${{\cal U}^ {\rm n}} = {{\cal K}}\backslash {{\cal U}_{b - \Delta b + 1}}$. For any ${{\cal S}^1} \subseteq {{\cal U}_{b - \Delta b + 1}}$ and ${{\cal S}^2} \subseteq {\cal K} \backslash {{\cal U}_{b - \Delta b + 1}}$, the encoding set ${\cal S} = {{\cal S}^1} \cup {{\cal S}^2} \subseteq \cal K$ for any $ s  \in \mathcal K $ in synchronous coded caching collapses into ${\cal S}^1$ in asynchronous coded caching.
Furthermore, according to the proposed encoding set partition method in Section III-B, only the files requested by the active F-APs in ${\cal U}^{\rm a}$ can be encoded with each other, which means that $\cal S$ collapses into $\cal S \cap {\cal U}^{\rm a}$, i.e., $\left( {{{\cal S}^1} \cup {{\cal S}^2}} \right)\cap {\cal U}^{\rm a}$, by the end of time slot $b$.
Let $\chi  = |{\cal S}^{1} | $ and $s- \chi = | {\cal S}^{2} | $.
Suppose $\chi  \in \left\{ {\underline \chi  ,\underline \chi   + 1, \ldots ,\overline \chi  } \right\}$, where $\underline \chi$ and $\overline \chi$ denote the minimum and maximum of $\chi$,\footnote{{
In order that the coded-multicasting content transmitted to the F-APs in  $\cal S$ by the cloud server is useful to at least one F-AP in  ${{\cal U}_{b - \Delta b + 1}}$, ${{\cal S}^1} \cap {\cal S} \ne \varnothing $, i.e., $\chi \ge 1$, needs to be guaranteed. 
Since ${{\cal S}^2} \subseteq {\cal K} \backslash {{\cal U}_{b - \Delta b + 1}}$, we have $| {\cal S}^{2} |
= s - \chi \le  |{\cal K}|   - \left| {\cal U}_{b - \Delta b + 1}\right|$.
Then, we have $\chi \ge  s + \left| {\cal U}_{b - \Delta b + 1} \right| -  | \cal K |$.
Therefore, $\underline \chi   = \max \left\{ {1,s + \left| {{{\cal U}_{b - \Delta b + 1}}} \right| -  |\cal K|} \right\}$.
Besides, it can be easily verified that $\overline \chi   = \min \left\{ {s,\left| {{{\cal U}_{b - \Delta b + 1}}} \right|} \right\}$.
}} respectively.
By considering any $\chi  \in \left\{ {\underline \chi  ,\underline \chi   + 1, \ldots ,\overline \chi  } \right\}$ with any $s \in \cal K$,
the requests of the F-APs in ${\cal U}_{b - \Delta b + 1}$ can be fulfilled by the end of time slot $b$ with sufficient coded multicasting opportunities being created for the F-APs in $\left( {{{\cal S}^1} \cup {{\cal S}^2}} \right)\cap {\cal U}^{\rm a}$.
Recall that the  F-APs in $\left( {{{\cal S}^1} \cup {{\cal S}^2}} \right)\backslash \left\{ k \right\}$ share a subfile which is not available in the cache content ${{Z_k}}$ and requested by F-AP  $k \in \left( {{{\cal S}^1} \cup {{\cal S}^2}} \right) $.
For any ${\cal S}^1$ and  ${\cal S}^2$ with any $s$, in order to avoid transmitting subfiles repeatedly, no contents need to be transmitted
if ${W_{k,\left( {{{\cal S}^1} \cup {{\cal S}^2}} \right)\backslash \left\{ k \right\}}^{\rm a}} = \varnothing $ for $k \in \left( {\left( {{{\cal S}^1} \cup {{\cal S}^2}} \right) \cap {{{\cal U}^ {\rm a}}}} \right)$.
Otherwise, the cloud server transmits the coded-multicasting content by the end of time slot $b$ as follows
 \[{ \oplus _{k \in \left( {\left( {{{\cal S}^1} \cup {{\cal S}^2}} \right) \cap {{{\cal U}^ {\rm a}}}} \right)}}{W_{k,\left( {{{\cal S}^1} \cup {{\cal S}^2}} \right)\backslash \left\{ k \right\}}^{\rm a}}.\]}

 After the transmission is completed,
  each F-AP in ${{\cal U}_{b - \Delta b + 1}}$ recovers  the desirable subfiles of its requested file.
 Then, each F-AP in ${{\cal U}_{b - \Delta b + 1}}$ transmits  the recovered subfiles and the  corresponding subfiles available in its cache to its served user immediately. Correspondingly, the user can recover the desirable file.
Each F-AP in ${{{\cal U}^ {\rm a}}}\backslash {{\cal U}_{b - \Delta b + 1}}$ also  recovers the corresponding desirable subfiles and  transmits them to its served user at this time.
In addition, the cloud server needs to update the cache records of the active F-APs
  by adding a record of the subfiles recovered by each F-AP in ${{\cal U}^ {\rm a}}$  as its cache content at the end of this time slot.
  Note that updating the cache records has no influence on the cache contents of the F-APs, which stay unchanged in the delivery phase. The cache records can help the cloud server identify  whether the subfile to be transmitted is $\varnothing$  or not in real time before transmission.

\begin{algorithm}[!t]
\footnotesize
\setcounter{algorithm}{1}
\caption{The proposed asynchronous coded caching scheme}
\label{alg}
\begin{algorithmic}[1]
\STATE PLACEMENT
\FOR{ $k \in {\cal K}, n \in {\cal N}$}
\STATE F-AP $k$  independently caches ${MF}/{N}$ bits of file $W_n$, chosen uniformly at random.
\ENDFOR
\\-------------------------------------------------------------
\STATE DELIVERY
\STATE Initialize ${{\cal U}^ {\rm a}} = \varnothing$, $b  = 1$.
\WHILE{$b \le B $}

\IF{$ \Delta b < B$}

\IF{$b \le \Delta b - 1$}
\STATE ${{\cal U}^ {\rm a}} = {{\cal U}^ {\rm a}} \cup {{\cal U}_b}$,
\STATE At the end of time slot $b$,  no contents are transmitted.

\ELSIF{$\Delta b - 1 < b < {B}$}
\STATE ${\cal U}^ {\rm a} = {\cal U}^ {\rm a} \cup {\cal U}_b$.
\FOR{$s = \left| {\cal K} \right|,\left| {\cal K} \right| - 1,\ldots,1$}
\FOR{$\chi  = \max \left\{ {1,s + \left| {{{\cal U}_{b - \Delta b + 1}}} \right| - \left| {\cal K} \right|} \right\}:\min \left\{ {s,\left| {{{\cal U}_{b - \Delta b + 1}}} \right|} \right\}$}
    \FORALL{${\cal S}^1 \subseteq {{\cal U}_{b - \Delta b + 1}},{\cal S}^2 \subseteq {{\cal K}} \backslash {{\cal U}_{b - \Delta b + 1} }:\left| {{\cal S}^1} \right| = \chi ,\left| {{\cal S}^2} \right| = s - \chi$}
        \STATE At the end of time slot $b$,  no contents are transmitted if ${W_{k,\left( {{{\cal S}^1} \cup {{\cal S}^2}} \right)\backslash \left\{ k \right\}}^{\rm a}} = \varnothing $ for  $k \in \left( {\left( {{{\cal S}^1} \cup {{\cal S}^2}} \right) \cap {{\cal U}^ {\rm a}}} \right)$;
        Otherwise, the cloud server sends
        ${ \oplus _{k \in \left( {\left( {{{\cal S}^1} \cup {{\cal S}^2}} \right) \cap {{\cal U}^ {\rm a}}} \right)}}{W_{k,\left( {{{\cal S}^1} \cup {{\cal S}^2}} \right)\backslash \left\{ k \right\}}^{\rm a}}$.
    \ENDFOR
\ENDFOR
\ENDFOR
    \STATE ${{\cal U}^ {\rm a}} = {{\cal U}^ {\rm a}} \backslash {{\cal U}_{b - \Delta b+ 1}}$;

\ELSE
\STATE ${\cal U}^ {\rm a} = {\cal U}^ {\rm a} \cup {\cal U}_b$.
\FOR{$s = \left| {\cal K} \right|,\left| {\cal K} \right| - 1,\ldots,1$}
\FOR{$\chi  = \max \left\{ {1,s + \left| {{{\cal U}^ {\rm a}}} \right| - \left| {\cal K} \right|} \right\}:\min \left\{ {s,\left| {{{\cal U}^ {\rm a}}} \right|} \right\}$}
    \FORALL{${\cal S}^1 \subseteq {{\cal U}^ {\rm a}},{\cal S}^2 \subseteq {{\cal K}} \backslash {{\cal U}^ {\rm a} }:\left| {{\cal S}^1} \right| = \chi ,\left| {{\cal S}^2} \right| = s - \chi$}
        \STATE At the end of time slot $B$,
        the cloud server sends
        ${ \oplus _{k \in {\cal S}^1}}{W_{k,\left( {{{\cal S}^1} \cup {{\cal S}^2}} \right)\backslash \left\{ k \right\}}^{\rm a}}.$

    \ENDFOR
\ENDFOR
\ENDFOR
\ENDIF

\ELSE
\IF{$b \le B - 1$}
\STATE At the end of time slot $b$,  no contents are transmitted.
\ELSE
\FOR{$s = \left| {\cal K} \right|,\left| {\cal K} \right| - 1,\ldots,1$}
    \FORALL{${\cal S} \subseteq {{\cal K}}:\left| {\cal S} \right| = s$}
        \STATE At the end of time slot $B$, the cloud server sends ${ \oplus _{k \in {\cal S}}}{W_{k,{\cal S}\backslash \left\{ k \right\}}^{\rm a}}$.
    \ENDFOR
\ENDFOR
\ENDIF

\ENDIF

\STATE $b = b + 1$.
\ENDWHILE

\end{algorithmic}
\end{algorithm}

At the end of time slot $B$, all the requests of the F-APs in ${\cal U}^ {\rm a}$ should be fulfilled together. Similarly, define $\underline \chi^\prime   = \max \left\{ {1,s + \left| {{{\cal U}^ {\rm a}}} \right| - \left| {\cal K} \right|} \right\}$ and $\overline \chi^\prime   = \min \left\{ {s,\left| {{{\cal U}^ {\rm a}}} \right|} \right\}$.
Focus on ${\cal S}^1 \subseteq {{\cal U}^ {\rm a}}$ with {$\chi = | {{{\cal S}^1}} | $} and  ${\cal S}^2 \subseteq {{\cal K}} \backslash {{\cal U}^ {\rm a} }$ with {$s - \chi = | {{{\cal S}^2}} | $}.
For any {$s \in \cal K$} and
{any} $\chi \in \left\{ {\underline \chi^\prime  ,\underline \chi^\prime   + 1, \ldots ,\overline \chi^\prime  } \right\}$,
the cloud server transmits the coded-multicasting content {by the end of time slot $B$} as follows
\[{ \oplus _{k \in {\cal S}^1}}{W_{k,\left( {{{\cal S}^1} \cup {{\cal S}^2}} \right)\backslash \left\{ k \right\}}^{\rm a}},\]
where all the subfiles ${W_{k,\left( {{{\cal S}^1} \cup {{\cal S}^2}} \right)\backslash \left\{ k \right\}}^{\rm a}}$ are assumed to be zero-padded to the number of bits of the longest subfile in the bit-wise XOR operation.
After that, each F-AP in ${{\cal U}^ {\rm a}}$ recovers  the subfiles of its requested file, and then transmits the recovered subfiles and the subfiles available in its cache to its served user. Correspondingly, the user can recover the desirable file.

\subsubsection{Synchronous Transmission Method}

When $\Delta b = B$, the synchronous transmission method is chosen.
Firstly, no contents {need to be} transmitted
at the end of  time slot $1, 2,\ldots ,B - 1$.
At the end of time slot $B$, for all $\cal S$ with any $s$, the cloud server transmits the coded-multicasting content as follows
  \[{ \oplus _{k \in {\cal S}}}{W_{k,{\cal S}\backslash \left\{ k \right\}}^{\rm a}}.\]
Then, each F-AP transmits all  the subfiles of its requested file to its served user. Correspondingly, the user can recover the desirable file.

The detailed description of our proposed decentralized asynchronous coded caching scheme is presented in Algorithm \ref{alg}.
Note that the problem setting allows
for a vanishing probability of error as $F \to \infty$.

\emph{\textbf{Example 3:}} Consider the same setting as \emph{Example 1}.
It can be readily seen that this corresponds to the worst-case request.
According to Algorithm \ref{alg}, the coded-multicasting contents transmitted by the cloud server at the end of time slot 2, 3, and 4  are illustrated in Table \ref{res2}, Table \ref{res3}, and Table \ref{res4}, respectively.
Note that  $\varnothing $ indicates that no contents are transmitted in the tables.  In addition, subfile ${W_{3,\left\{ {2,4} \right\}}^{\rm a}}$ is actually $\varnothing$ according to the updated cache records.

Still consider the same setting as \emph{Example 1}. We explain here how  Algorithm \ref{alg} implements our proposed encoding set partition method.
 Focus on
  ${\cal S} = \left\{ {1,2,3,4} \right\}$.
 Firstly, no contents are transmitted at the end of time slot $1$. At the end of time slot $2$,
  ${W_{1,\left\{ {2,3,4} \right\}}^{\rm a}} \oplus {W_{2,\left\{ {1,3,4} \right\}}^{\rm a}}$ is transmitted with ${{\cal U}^ {\rm a}} = \left\{ {1,2} \right\}$.
 At the end of time slot $3$, the cloud server decides not to transmit ${W_{2,\left\{ {1,3,4} \right\}}^{\rm a}} \oplus {W_{3,\left\{ {1,2,4} \right\}}^{\rm a}}$ with ${{\cal U}^ {\rm a}} = \left\{ {2,3} \right\}$, since ${W_{2,\left\{ {1,3,4} \right\}}^{\rm a}}$ is $\varnothing $
according to the updated cache records.
Correspondingly, no contents are transmitted.
 Finally,
 ${W_{3,\left\{ {1,2,4} \right\}}^{\rm a}} \oplus {W_{4,\left\{ {1,2,3} \right\}}^{\rm a}}$ is transmitted with ${{\cal U}^ {\rm a}} = \left\{ {3,4} \right\}$
 at the end of time slot $4$. It can be readily seen that  $W_{1,\left\{ {2,3,4} \right\}}^{\rm s} \oplus W_{2,\left\{ {1,3,4} \right\}}^{\rm s} \oplus W_{3,\left\{ {1,2,4} \right\}}^{\rm s} \oplus W_{4,\left\{ {1,2,3} \right\}}^{\rm s}$ is partitioned into two parts of equal size, i.e., ${W_{1,\left\{ {2,3,4} \right\}}^{\rm a}} \oplus {W_{2,\left\{ {1,3,4} \right\}}^{\rm a}}$ and ${W_{3,\left\{ {1,2,4} \right\}}^{\rm a}} \oplus {W_{4,\left\{ {1,2,3} \right\}}^{\rm a}}$, for transmission in our proposed asynchronous coded caching scheme.

\begin{table}[!t]
\small
\centering
\setlength{\abovecaptionskip}{0pt}%
\setlength{\belowcaptionskip}{10pt}%
\caption{The contents transmitted during time slot $2$}
\label{res2}
\scalebox{1}{
\begin{tabular}{|c|c|c|c|c|c|}
\hline
$s$ & $\chi$ & ${\cal S}^1$ & ${\cal S}^2$ & ${\cal U}^ {\rm a}$ & Coded-multicasting content   \\
\hline
4 & 1 & $\left\{ 1 \right\}$ & $\left\{ {2,3,4} \right\}$ & $\left\{ {1,2} \right\}$ & ${W_{1,\left\{ {2,3,4} \right\}}^{\rm a}} \oplus {W_{2,\left\{ {1,3,4} \right\}}^{\rm a}}$  \\
\hline
3 & 1 & $\left\{ 1 \right\}$ & $\left\{ {2,3} \right\}$ & $\left\{ {1,2} \right\}$ & ${W_{1,\left\{ {2,3} \right\}}^{\rm a}} \oplus {W_{2,\left\{ {1,3} \right\}}^{\rm a}}$  \\
\hline
3 & 1 & $\left\{ 1 \right\}$ & $\left\{ {2,4} \right\}$ & $\left\{ {1,2} \right\}$ & ${W_{1,\left\{ {2,4} \right\}}^{\rm a}} \oplus {W_{2,\left\{ {1,4} \right\}}^{\rm a}}$  \\
\hline
3 & 1 & $\left\{ 1 \right\}$ & $\left\{ {3,4} \right\}$ & $\left\{ {1,2} \right\}$ & ${W_{1,\left\{ {3,4} \right\}}^{\rm a}}$  \\
\hline
 2& 1 & $\left\{ 1 \right\}$ & $\left\{ {2} \right\}$ & $\left\{ {1,2} \right\}$ & ${W_{1,\left\{ 2 \right\}}^{\rm a}} \oplus {W_{2,\left\{ 1 \right\}}^{\rm a}}$  \\
\hline
2& 1 & $\left\{ 1 \right\}$ & $\left\{ {3} \right\}$ & $\left\{ {1,2} \right\}$ & ${W_{1,\left\{ 3 \right\}}^{\rm a}} $  \\
\hline
2& 1 & $\left\{ 1 \right\}$ & $\left\{ {4} \right\}$ & $\left\{ {1,2} \right\}$ & ${W_{1,\left\{ 4 \right\}}^{\rm a}}$  \\
\hline
1& 1 & $\left\{ 1 \right\}$ & $\varnothing$ & $\left\{ {1,2} \right\}$ & ${W_{1,\varnothing }^{\rm a}}$  \\
\hline
\end{tabular}}
\end{table}

\begin{table}[!t]
\small
\centering
\setlength{\abovecaptionskip}{0pt}%
\setlength{\belowcaptionskip}{10pt}%
\caption{The contents transmitted during time slot $3$}
\label{res3}
\scalebox{1}{
\begin{tabular}{|c|c|c|c|c|c|}
\hline
$s$ & $\chi$ & ${\cal S}^1$ & ${\cal S}^2$ & ${\cal U}^ {\rm a}$ & Coded-multicasting content   \\
\hline
4 & 1 & $\left\{ 2 \right\}$ & $\left\{ {1,3,4} \right\}$ & $\left\{ {2,3} \right\}$ & $\varnothing $  \\
\hline
3 & 1 & $\left\{ 2 \right\}$ & $\left\{ {1,3} \right\}$ & $\left\{ {2,3} \right\}$ & $\varnothing $ \\
\hline
3 & 1 & $\left\{ 2 \right\}$ & $\left\{ {1,4} \right\}$ & $\left\{ {2,3} \right\}$ & $\varnothing $  \\
\hline
3 & 1 & $\left\{ 2 \right\}$ & $\left\{ {3,4} \right\}$ & $\left\{ {2,3} \right\}$ & ${W_{2,\left\{ {3,4} \right\}}^{\rm a}} \oplus {W_{3,\left\{ {2,4} \right\}}^{\rm a}}$  \\
\hline
 2& 1 & $\left\{ 2 \right\}$ & $\left\{ {1} \right\}$ & $\left\{ {2,3} \right\}$ & $\varnothing $  \\
\hline
2& 1 & $\left\{ 2 \right\}$ & $\left\{ {3} \right\}$ & $\left\{ {2,3} \right\}$ & ${W_{2,\left\{ 3 \right\}}^{\rm a}} \oplus {W_{3,\left\{ 2 \right\}}^{\rm a}}$  \\
\hline
2& 1 & $\left\{ 2 \right\}$ & $\left\{ {4} \right\}$ & $\left\{ {2,3} \right\}$ & ${W_{2,\left\{ 4 \right\}}^{\rm a}}$  \\
\hline
1& 1 & $\left\{ 2 \right\}$ & $\varnothing$ & $\left\{ {2,3} \right\}$ & ${W_{2,\varnothing }^{\rm a}}$  \\
\hline
\end{tabular}}
\end{table}

\begin{table}[!t]
\small
\centering
\setlength{\abovecaptionskip}{0pt}%
\setlength{\belowcaptionskip}{10pt}%
\caption{The contents transmitted during time slot $4$}
\label{res4}
\scalebox{1}{
\begin{tabular}{|c|c|c|c|c|c|}
\hline
$s$ & $\chi$ & ${\cal S}^1$ & ${\cal S}^2$ & ${\cal U}^ {\rm a}$ & Coded-multicasting content  \\
\hline
4 & 2 & $\left\{ {3,4} \right\}$ & $\left\{ {1,2} \right\}$ & $\left\{ {3,4} \right\}$ & ${W_{3,\left\{ {1,2,4} \right\}}^{\rm a}} \oplus {W_{4,\left\{ {1,2,3} \right\}}^{\rm a}}$  \\
\hline
3 & 1 & $\left\{ 3 \right\}$ & $\left\{ {1,2} \right\}$ & $\left\{ {3,4} \right\}$ & ${W_{3,\left\{ {1,2} \right\}}^{\rm a}}$ \\
\hline
3 & 1 & $\left\{ 4 \right\}$ & $\left\{ {1,2} \right\}$ & $\left\{ {3,4} \right\}$ & ${W_{4,\left\{ {1,2} \right\}}^{\rm a}}$  \\
\hline
3 & 2 & $\left\{ {3,4} \right\}$ & $\left\{ {1} \right\}$ & $\left\{ {3,4} \right\}$ & ${W_{3,\left\{ {1,4} \right\}}^{\rm a}} \oplus {W_{4,\left\{ {1,3} \right\}}^{\rm a}}$  \\
\hline
3 & 2 & $\left\{ {3,4} \right\}$ & $\left\{ {2} \right\}$ & $\left\{ {3,4} \right\}$ & ${W_{3,\left\{ {2,4} \right\}}^{\rm a}}\left( \varnothing  \right) \oplus {W_{4,\left\{ {2,3} \right\}}^{\rm a}}$  \\
\hline
 2& 1 & $\left\{ 3 \right\}$ & $\left\{ {1} \right\}$ & $\left\{ {3,4} \right\}$ & ${W_{3,\left\{ 1 \right\}}^{\rm a}}$  \\
\hline
2& 1 & $\left\{ 3 \right\}$ & $\left\{ {2} \right\}$ & $\left\{ {3,4} \right\}$ & $\varnothing$  \\
\hline
2& 1 & $\left\{ 4\right\}$ & $\left\{ {1} \right\}$ & $\left\{ {3,4} \right\}$ & ${W_{4,\left\{ 1 \right\}}^{\rm a}}$  \\
\hline
2& 1 & $\left\{ 4 \right\}$ & $\left\{ {2} \right\}$ & $\left\{ {3,4} \right\}$ & ${W_{4,\left\{ 2 \right\}}^{\rm a}}$  \\
\hline
2& 2 & $\left\{ {3,4} \right\}$ & $\varnothing$ & $\left\{ {3,4} \right\}$ & ${W_{3,\left\{ 4 \right\}}^{\rm a}} \oplus {W_{4,\left\{ 3 \right\}}^{\rm a}}$  \\
\hline
1& 1 & $\left\{ 3 \right\}$ & $\varnothing$ & $\left\{ {3,4} \right\}$ & ${W_{3,\varnothing }^{\rm a}}$  \\
\hline
1& 1 & $\left\{ 4 \right\}$ & $\varnothing$ & $\left\{ {3,4} \right\}$ & ${W_{4,\varnothing }^{\rm a}}$  \\
\hline
\end{tabular}}
\end{table}

\newcounter{TempEqCnt} 
\setcounter{TempEqCnt}{\value{equation}} 
\setcounter{equation}{8} 
\begin{figure*}[hb]
\hrulefill
\begin{equation}\label{q}
q\left( {s,Y,\Delta b} \right) = \left\{ {\begin{array}{*{20}{l}}
{\left( {\begin{array}{*{20}{c}}
K\\
{s }
\end{array}} \right),}&{\Delta b = B,}\\
{{q_2}\left( {s,Y,\Delta b} \right),}&{\Delta b = 1,}\\
{{q_2}\left( {s,Y,\Delta b} \right),}&{\Delta b = \max \left\{ {\left\lceil {\frac{{s  - \left( {Y - 1} \right)L}}{L}} \right\rceil ,1} \right\}< B,}\\
{\sum\limits_{\Delta {b^\prime } = \max \left\{ {\left\lceil {\frac{{s  - \left( {Y - 1} \right)L}}{L}} \right\rceil ,1} \right\}}^{\Delta b - 1} {{q_1}\left( {s,Y,\Delta {b^\prime },\Delta b} \right)}  + {q_2}\left( {s,Y,\Delta b} \right),}&{\max \left\{ {\left\lceil {\frac{{s - \left( {Y - 1} \right)L}}{L}} \right\rceil ,1} \right\} < \Delta b \le B - \left( {Y - 1} \right)\Delta b,}\\
{\sum\limits_{\Delta {b^\prime } = \max \left\{ {\left\lceil {\frac{{s  - \left( {Y - 1} \right)L}}{L}} \right\rceil ,1} \right\}}^{B - \left( {Y - 1} \right)\Delta b} {{q_1}\left( {s,Y,\Delta {b^\prime },\Delta b} \right)} ,}&{B - \left( {Y - 1} \right)\Delta b < \Delta b < B.}
\end{array}} \right.
\end{equation}
\end{figure*}
\setcounter{equation}{\value{TempEqCnt}} 

\emph{Remark 3:}
The major innovation of our proposed scheme is
to partition the coded-multicasting contents in synchronous coded caching by using our proposed encoding set partition method.
{{Our contributions are mainly reflected in selecting a part of the file library for further processing based on the request arrival status of the considered time slot.}}
Our proposed scheme  can create
{considerable coded-multicasting opportunities}
while the maximum request delay of each F-AP is no more than $\Delta b$ time slots.

\emph{Remark 4:} {
Both Maddah-Ali-Niesen's decentralized 
scheme in \cite{Maddah-Ali2}  and our proposed scheme are time-slot based ones.
However, in Maddah-Ali-Niesen's asynchronous coded caching scheme, only simple extension from synchronous coded caching is considered. Just as shown in the provided example, during the first and last two time slots, no coded multicasting opportunities have been created.
In comparison, in our proposed scheme, we try to create sufficient coded multicasting opportunities under the constraint of the maximum request delay $\Delta b$.
Specifically, when $\Delta b >1$, the cloud server does not transmit any content at the end of time slot $1, 2, \ldots, \Delta b-1$ in order that more coded multicasting opportunities can be created in the subsequent time slots.
Besides, at the end of the last time slot $B$, coded multicasting opportunities have been created to fulfil all the requests of the active F-APs.
}

\emph{Remark 5:} {In \cite{Ghasemi}, the asynchronous coded caching problem has been considered mainly for the offline case, where the authors  presented their proposed approach based on a system using the centralized synchronous coded caching scheme in \cite{Maddah-Ali1}. Furthermore, the authors proposed a linear programming formulation that minimizes the overall rate from the server subject to the constraint that each user meets its deadline.
In comparison, we  propose a decentralized asynchronous coded caching scheme based on a different system model for the online case, which is more applicable for practical scenarios, and can work well for both the online case and offline case.
Moreover, we propose an encoding set collapsing rule and an encoding set partition method to minimize the worst-case normalized fronthual load subject to a given maximum request delay.}

\emph{Remark 6:} In the delivery phase, the cloud server needs to partition $N$ files according to the updated cache records at the end of time slot $\Delta b,\Delta b + 1,\ldots,B $. In practice, the cloud server only needs to partition $N$ files based on the initial cache records before starting transmission, and updates the sizes of the  subfiles in the cache records that are encoded to be transmitted by setting them to $\varnothing$ directly after completing transmission at the end of each time slot.

\section{Performance Analysis of the Proposed Asynchronous Coded Caching Scheme}

To emphasize the dependence of the fronthaul load on the cache size $M$, the number of files $N$,
the number of F-APs $K$ and the maximum request delay, let $R_A\left( {M,N,K,\Delta b} \right)$ denote the fronthaul load of our proposed asynchronous coded caching scheme.
As the  request distribution during the $B$ time slots affects $R_A\left( {M,N,K,\Delta b} \right)$, it is hard to obtain its exact expression. Focusing on the special case with $\left| {{{\cal U}_b}} \right| = L$, we can derive the closed-form expression of $R_A\left( {M,N,K,\Delta b} \right)$. As for the general case with random $\left| {{{\cal U}_b}} \right|$, we can establish the upper and lower bounds of $R_A\left( {M,N,K,\Delta b} \right)$.

\subsection{Special Case with $\left| {{{\cal U}_b}} \right| = L$}

Let $\left| {{W_{k,{{{\cal S} \backslash \left\{ k \right\}}}}^{\rm a}}} \right|$ denote the size of ${{W_{k,{{{\cal S} \backslash \left\{ k \right\}}}}^{\rm a}}}$.
According to the law of large numbers, we have
\begin{equation} \label{subfile}
\left| {W_{k,{\cal S}\backslash \left\{ k \right\}}^{\rm{a}}} \right| \approx {\left( {M/N} \right)^{s - 1}}{\left( {1 - M/N} \right)^{K - \left( {s - 1} \right)}}F.
\end{equation}

\begin{theorem} \label{t1}

Consider the special case with
$\left| {{{\cal U}_b}} \right| = L$ and
$B \ge 3$.
For  large enough $F$, the fronthaul load is arbitrarily close to
\begin{equation}\label{theo1}
\begin{split}
R_A\left( {M,N,K,\Delta b} \right) & = \sum\limits_{s = 1}^{K} {\left| {W_{k,{{\cal S} \backslash \left\{ k \right\}}}^{\rm{a}}} \right|\sum\limits_{Y = \left\lceil {\frac{{s}}{{\Delta b \cdot L}}} \right\rceil }^{\min \left\{ {\left\lceil {\frac{B}{{\Delta b}}} \right\rceil ,s } \right\}} {q\left( {s,Y,\Delta b} \right)} Y},
\end{split}
\end{equation}
where $q\left( {s,Y,\Delta b} \right)$ is shown at the bottom of this page with
\setcounter{equation}{9}
\begin{small}
\begin{equation}\label{q1Theo1}
\begin{split}
&{q_1}\left( {s,Y,{{\Delta b}^\prime} ,\Delta b} \right) \\
& = \begin{array}{*{20}{l}}
{\left( {\begin{array}{*{20}{c}}
{B - {{\Delta b}^\prime}  + \left( {Y - 1} \right)\left( {\Delta b - 1} \right)}\\
{Y - 1}
\end{array}} \right)}\\
{  \times  \sum\limits_{\alpha  = \max \left\{ {Y,s  - \left( {\left( {Y - 1} \right)\Delta b + {{\Delta b}^\prime}  - Y} \right)L} \right\}}^{\min \left\{ {s ,YL} \right\}} {b\left( {Y,\alpha } \right)\left( {\begin{array}{*{20}{c}}
{\left( {\left( {Y - 1} \right)\Delta b + {{\Delta b}^\prime}  - Y} \right)L}\\
{s - \alpha  }
\end{array}} \right)} ,}
\end{array}
 \end{split}
 \end{equation}
 \end{small}
\begin{small}
\begin{equation} \label{q2}
\begin{split}
&{q_2}\left( {s,Y,\Delta b} \right) \\
 &= \left\{ {\begin{array}{*{20}{l}}
{\left( {\begin{array}{*{20}{c}}
B\\
Y
\end{array}} \right)b\left( {Y,s} \right),}&{\Delta b = 1,}\\
{\begin{array}{*{20}{l}}
{\left( {\begin{array}{*{20}{c}}
{B - Y\left( {\Delta b - 1} \right)}\\
Y
\end{array}} \right)}\\
{  \times  \sum\limits_{\alpha  = \max \left\{ {Y,s - Y\left( {\Delta b - 1} \right)L} \right\}}^{\min \left\{ {s,Y  L} \right\}} {b\left( {Y,\alpha } \right)\left( {\begin{array}{*{20}{c}}
{Y\left( {\Delta b - 1} \right)L}\\
{s  - \alpha }
\end{array}} \right)} ,}
\end{array}}&{1 < \Delta b < B,}
\end{array}} \right.
 \end{split}
 \end{equation}
 \end{small}
 \begin{small}
 \begin{equation} \label{b}
b\left( {Y,\alpha } \right) = \left\{ {\begin{array}{*{20}{l}}
{\left( {\begin{array}{*{20}{c}}
L\\
\alpha
\end{array}} \right),}&{Y = 1,1 \le \alpha  \le L,}\\
{{{\left( {\begin{array}{*{20}{c}}
L\\
1
\end{array}} \right)}^Y},}&{Y > 1,\alpha  = Y,}\\
{1,}&{Y > 1,L > 1,\alpha  = YL,}\\
{\sum\limits_{v = 1}^{\min \left\{ {L,\alpha  - \left( {Y - 1} \right)} \right\}} {\left( {\begin{array}{*{20}{c}}
L\\
v
\end{array}} \right)b\left( {Y - 1,\alpha  - v} \right),} }&{\rm{else.}}
\end{array}} \right.
 \end{equation}
 \end{small}

\end{theorem}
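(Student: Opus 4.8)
The plan is to split the claim into a ``size'' reduction and a ``counting'' evaluation. For the reduction, note that the placement is exactly the decentralized Maddah--Ali--Niesen placement, so for any $\mathcal S$ with $|\mathcal S|=s$ and any $k\in\mathcal S$ the subfile $W^{\rm a}_{k,\mathcal S\backslash\{k\}}$ consists of the bits of $W_{d_k}$ cached at precisely the $s-1$ F-APs of $\mathcal S\backslash\{k\}$, so by the law of large numbers its length is, with probability tending to $1$ as $F\to\infty$, arbitrarily close to the common value $(M/N)^{s-1}(1-M/N)^{K-(s-1)}F$ of~\eqref{subfile}. I would then trace Algorithm~\ref{alg} to show that, over the whole horizon, the $s$ subfiles of type $s$ attached to a fixed encoding set $\mathcal S$ are delivered in exactly $\eta_{\mathcal S}(\Delta b)$ nonzero XOR'ed transmissions, one per partition piece produced by Algorithm~\ref{partitionMethod}, each transmitted (after zero-padding) at the time slot where the first F-AP of that piece must be served. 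The crucial point is that cache-record updating makes every re-appearance of $\mathcal S$ at a later slot carry at least one empty subfile, which triggers the ``no-transmission'' branch when $b<B$ and contributes nothing to the XOR when $b=B$; hence each of the $s$ subfiles is sent once, no transmission is double counted, and each of the $\eta_{\mathcal S}(\Delta b)$ transmissions still has asymptotic length $|W^{\rm a}_{k,\mathcal S\backslash\{k\}}|$. Together with the concentration statement this gives, for large $F$,
\[
R_A(M,N,K,\Delta b)=\sum_{s=1}^{K}\bigl|W^{\rm a}_{k,\mathcal S\backslash\{k\}}\bigr|\sum_{|\mathcal S|=s}\eta_{\mathcal S}(\Delta b),
\]
so \eqref{theo1} follows once $\sum_{|\mathcal S|=s}\eta_{\mathcal S}(\Delta b)=\sum_{Y}q(s,Y,\Delta b)\,Y$ is established, with $q(s,Y,\Delta b)$ equal to the number of size-$s$ encoding sets for which Algorithm~\ref{partitionMethod} returns exactly $Y$ pieces.

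For the counting, observe first that since $|\mathcal U_b|=L$ for every $b$ and the F-APs inside a slot are exchangeable, $\eta_{\mathcal S}(\Delta b)$ depends on $\mathcal S$ only through the set of \emph{occupied} slots $\{b:\mathcal U_b\cap\mathcal S\neq\varnothing\}$. I would run Algorithm~\ref{partitionMethod} symbolically to identify, for a fixed $Y$, exactly the occupancy patterns yielding $Y$ pieces: these are the ones whose pieces are consecutive ``windows'' with first slots $t_1<\cdots<t_Y$ obeying $t_{j+1}\ge t_j+\Delta b$, where the first $Y-1$ windows have width exactly $\Delta b$, the last window has some width $\Delta b'\in\{1,\dots,\Delta b\}$ reaching up to the last occupied slot, each $t_j$ is occupied, every other slot inside a window may or may not be occupied, and every slot lying strictly between two windows or outside all windows is empty (this emptiness is exactly what prevents the greedy from opening a window earlier, cf.\ Remark~2). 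The count then factorizes as (admissible placements of the window starts) $\times$ (ways to spread the $s$ chosen F-APs over the window-defining slots and the remaining in-window slots). The placement factor is a stars-and-bars count: the substitution $s_j=t_j-(j-1)(\Delta b-1)$ turns the spacing constraints into strict inequalities, giving $\binom{B-Y(\Delta b-1)}{Y}$ when the last window is full --- the leading factor of $q_2$ --- and $\binom{B-\Delta b'+(Y-1)(\Delta b-1)}{Y-1}$ when the last window has width $\Delta b'$, the leading factor of $q_1$. The F-AP factor splits into those landing in the $Y$ window-defining slots, counted by $b(Y,\alpha)=\sum_{v_1+\cdots+v_Y=\alpha,\;1\le v_i\le L}\prod_i\binom{L}{v_i}$, and those landing in the remaining $r$ in-window slots (with $r=Y(\Delta b-1)$ when the last window is full and $r=(Y-1)(\Delta b-1)+(\Delta b'-1)$ otherwise), counted by $\binom{rL}{s-\alpha}$; the admissible range $\max\{Y,\,s-rL\}\le\alpha\le\min\{s,YL\}$ is precisely the one appearing in~\eqref{q2} and~\eqref{q1Theo1}.

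For $b(Y,\alpha)$ itself I would establish the recursion~\eqref{b} by a short induction on $Y$: the base cases are $Y=1$ (one slot, $\binom{L}{\alpha}$), $\alpha=Y$ (every slot holds exactly one, $\binom{L}{1}^Y$) and $\alpha=YL$ with $L>1$ (every slot full, $1$), and the inductive step follows by conditioning on how many of the $L$ F-APs of the $Y$-th slot are selected, the count $v$ ranging over $1\le v\le\min\{L,\alpha-(Y-1)\}$. Finally I would dispose of the degenerate values of $\Delta b$ directly: $\Delta b=B$ is the synchronous transmission method, so $\eta_{\mathcal S}\equiv1$ and $q(s,1,B)=\binom{K}{s}$; $\Delta b=1$ makes every occupied slot its own window, so only the full-window term $q_2$ survives with $r=0$; and when $\Delta b$ equals the minimal value $\max\{\lceil(s-(Y-1)L)/L\rceil,1\}$ compatible with $Y$ pieces, a shorter last window cannot hold enough F-APs, so again $q=q_2$. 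Assembling these contributions according to whether $Y\Delta b\le B$ yields~\eqref{q} and hence~\eqref{theo1}.

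I expect the main obstacle to be the bookkeeping in the counting step --- pinning down the greedy-window characterization exactly (in particular the behaviour of the possibly truncated last window and the ``no occupied slot strictly between windows'' condition) and then organizing the enumeration so that each (occupancy pattern, F-AP assignment) pair is produced exactly once, with the summation limits on $Y$, $\Delta b'$ and $\alpha$ lining up with the feasibility constraints so that the out-of-range binomials vanish automatically. The size reduction, by contrast, is routine once~\eqref{subfile} is granted, the only delicate point being the argument that the cache-record mechanism keeps the redundant XOR's from inflating the load.
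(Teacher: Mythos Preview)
Your proposal is correct and follows essentially the same route as the paper's proof in Appendix~A: both reduce $R_A$ to $\sum_s |W^{\rm a}_{k,\mathcal S\backslash\{k\}}|\sum_{|\mathcal S|=s}\eta_{\mathcal S}(\Delta b)$, regroup the inner sum as $\sum_Y q(s,Y,\Delta b)\,Y$, model the $Y$ partition pieces as consecutive ``big time slots'' of width~$\Delta b$ (with a possibly shorter last block of width~$\Delta b'$), and factor $q$ as a stars-and-bars placement count times an F-AP occupancy count built from the recursion defining $b(Y,\alpha)$. Your write-up is in fact slightly more explicit than the paper's in two places---you justify the size reduction by tracing the cache-record update through Algorithm~\ref{alg} (the paper simply asserts this correspondence), and you obtain the placement factors via the change of variables $s_j=t_j-(j-1)(\Delta b-1)$ rather than by invoking the balls-into-boxes identity $c(g,e)=\binom{g+e-1}{g}$---but the underlying combinatorics and the case split on $\Delta b$ are identical.
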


\begin{proof}
Please see appendix A.
\end{proof}

\emph{Remark 6:}
when $B = 2$, we have
\begin{equation}
q\left( {s,Y,\Delta b} \right) = \left\{ {\begin{array}{*{20}{c}}
{\left( {\begin{array}{*{20}{c}}
K\\
{s }
\end{array}} \right),}&{\Delta b = 2,}\\
{{q_2}\left( {s,Y,\Delta b} \right),}&{\Delta b = 1.}
\end{array}} \right.
\end{equation}
For description convenience,  the case where $B = 2$ is  omitted in Theorem \ref{t1}.

Let ${R_S}\left( M,N,K \right)$ denote the fronthaul load of the Maddah-Ali-Niesen's decentralized synchronous coded caching scheme. From \cite{Maddah-Ali2}, we have
\begin{equation}\label{syn_load}
\begin{split}
 {R_S}\left( M,N,K \right)
   & = F \sum\limits_{s = 1}^{K} {\left( {\begin{array}{*{20}{c}}
K\\
s
\end{array}} \right){{\left( {M/N} \right)}^{s - 1 }}{{\left( {1 - M/N} \right)}^{K - \left( {s - 1} \right) }}}      \\
   &  = FK(1 - M/N)\frac{N}{{KM{\rm{ }}}}\left( {1 - {{\left( {1 - {M \mathord{\left/
 {\vphantom {M N}} \right.
 \kern-\nulldelimiterspace} N}} \right)}^K}} \right)
 , \quad N \ge K.
 \end{split}
\end{equation}

According to Theorem \ref{t1}, we also have
\begin{equation}
\begin{split}
 {\left. {{R_A}\left( {M,N,K,\Delta b} \right)} \right|_{\Delta b = B}} &= F\sum\limits_{s = 1}^{K } {\left| {W_{k,
{{\cal S} \backslash \left\{ k \right\}}}^{\rm{a}}} \right|\sum\limits_{Y = \left\lceil {\frac{{s }}{BL}} \right\rceil }^{\min \left\{ {1,s } \right\}} {q\left( {s,Y,\Delta b} \right)} Y}  \\
 &= F\sum\limits_{s = 1}^{K} {\left| {W_{k,
{{\cal S} \backslash \left\{ k \right\}}}^{\rm{a}}} \right|\sum\limits_{Y = 1}^1 {q\left( {s,Y,\Delta b} \right)} Y} \\
 & = F\sum\limits_{s = 1}^{K } {{{\left( {M/N} \right)}^{s - 1}}{{\left( {1 - M/N} \right)}^{K - \left( {s - 1} \right)}}\left( {\begin{array}{*{20}{c}}
K\\
{s}
\end{array}} \right)}  \\
&= {R_S}\left( {M,N,K} \right), \quad N \ge K.
 \end{split}
\end{equation}
It can be readily seen that our proposed scheme has the same fronthaul load as  the  Maddah-Ali-Niesen's decentralized scheme when ${\Delta b = B}$.

\subsection{General Case with Random $\left| {{{\cal U}_b}} \right|$}

\begin{theorem} \label{t2}
 The fronthaul load of  our proposed scheme  is bounded as follows
 \begin{small}
 \begin{equation}\label{theo2}
 \begin{split}
{R_S}\left( {M,N,K} \right) & \le {R_A}\left( {M,N,K,\Delta b} \right) \\
& \le FK(1 - \frac{M}{N})\min \left\{ {\left\lceil {\frac{B}{{\Delta b}}} \right\rceil \frac{N}{{KM}}\left( {1 - {{\left( {1 - {M \mathord{\left/
 {\vphantom {M N}} \right.
 \kern-\nulldelimiterspace} N}} \right)}^K}} \right),1} \right\}.
 \end{split}
\end{equation}
\end{small}
\end{theorem}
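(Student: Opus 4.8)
The plan is to reduce both inequalities to a single structural identity: for $F$ large enough and for \emph{any} realization of the request arrivals over the $B$ time slots,
\[
R_A\left(M,N,K,\Delta b\right)=\sum_{\varnothing\neq\mathcal S\subseteq\mathcal K}\eta_{\mathcal S}\!\left(\Delta b\right)\,\bigl|W^{\mathrm a}_{k,\mathcal S\setminus\{k\}}\bigr|,
\]
where each $\bigl|W^{\mathrm a}_{k,\mathcal S\setminus\{k\}}\bigr|$ is arbitrarily close to $\left(M/N\right)^{\left|\mathcal S\right|-1}\left(1-M/N\right)^{K-\left|\mathcal S\right|+1}F$ by \eqref{subfile}. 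I would establish this exactly as in the proof of Theorem~\ref{t1}: by the encoding set collapsing rule of Section~III-A, every content actually sent by the cloud server for a set $\mathcal S$ is a bit-wise XOR of subfiles of type $\left|\mathcal S\right|$ indexed by $\mathcal S\setminus\{k\}$, all of the same size in the limit, so after zero-padding each such transmission costs one subfile-size; by the encoding set partition method (Algorithm~\ref{partitionMethod}), $\mathcal S$ is split into exactly $\eta_{\mathcal S}\!\left(\Delta b\right)$ nonoverlapping nonempty parts, each giving rise to one such transmission; and the cache-record bookkeeping in Algorithm~\ref{alg} (setting an already-recovered subfile to $\varnothing$) guarantees that no subfile is transmitted twice while every F-AP still meets its $\Delta b$-slot deadline. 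Making this identity fully precise in the general random-$\left|\mathcal U_b\right|$ setting---in particular that the bookkeeping neither causes a repeat nor omits a needed subfile---is the step I expect to be the main obstacle; everything after it is a short combinatorial argument.

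Granting the identity, the lower bound is immediate: $\eta_{\mathcal S}\!\left(\Delta b\right)\ge 1$ for every $\mathcal S$ by \eqref{partM}, and in the limit $\bigl|W^{\mathrm a}_{k,\mathcal S\setminus\{k\}}\bigr|$ coincides with the corresponding synchronous subfile size, so
\[
R_A\left(M,N,K,\Delta b\right)\ \ge\ \sum_{\varnothing\neq\mathcal S\subseteq\mathcal K}\bigl|W^{\mathrm a}_{k,\mathcal S\setminus\{k\}}\bigr|\ =\ F\sum_{s=1}^{K}\binom{K}{s}\left(\frac MN\right)^{s-1}\left(1-\frac MN\right)^{K-s+1}\ =\ R_S\left(M,N,K\right),
\]
the last equality being \eqref{syn_load}; passing to the worst case over request arrivals preserves this inequality.

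For the upper bound I would bound $\eta_{\mathcal S}\!\left(\Delta b\right)$ in two ways at once: $\eta_{\mathcal S}\!\left(\Delta b\right)\le\left\lceil B/\Delta b\right\rceil$ by \eqref{partM}, and $\eta_{\mathcal S}\!\left(\Delta b\right)\le\left|\mathcal S\right|$ since Algorithm~\ref{partitionMethod} produces at most $\left|\mathcal S\right|$ nonempty parts; hence $\eta_{\mathcal S}\!\left(\Delta b\right)\le\min\bigl\{\left\lceil B/\Delta b\right\rceil,\left|\mathcal S\right|\bigr\}$. Using $\sum_{\mathcal S}\min\{a_{\mathcal S},b_{\mathcal S}\}\le\min\bigl\{\sum_{\mathcal S}a_{\mathcal S},\ \sum_{\mathcal S}b_{\mathcal S}\bigr\}$ for nonnegative terms, I would then write
\[
R_A\left(M,N,K,\Delta b\right)\ \le\ \min\Biggl\{\left\lceil\frac B{\Delta b}\right\rceil\!\!\sum_{\varnothing\neq\mathcal S\subseteq\mathcal K}\!\!\bigl|W^{\mathrm a}_{k,\mathcal S\setminus\{k\}}\bigr|,\ \ \sum_{\varnothing\neq\mathcal S\subseteq\mathcal K}\!\!\left|\mathcal S\right|\,\bigl|W^{\mathrm a}_{k,\mathcal S\setminus\{k\}}\bigr|\Biggr\}.
\]
The first inner sum equals $R_S\left(M,N,K\right)=FK\left(1-\frac MN\right)\frac N{KM}\bigl(1-\left(1-M/N\right)^{K}\bigr)$ by \eqref{syn_load}. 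For the second, I would use $s\binom{K}{s}=K\binom{K-1}{s-1}$ and the binomial theorem to get $F\sum_{s=1}^{K}s\binom{K}{s}\left(M/N\right)^{s-1}\left(1-M/N\right)^{K-s+1}=FK\left(1-M/N\right)$. Substituting both into the minimum yields precisely
\[
R_A\left(M,N,K,\Delta b\right)\ \le\ FK\left(1-\frac MN\right)\min\biggl\{\left\lceil\frac B{\Delta b}\right\rceil\frac N{KM}\bigl(1-\left(1-M/N\right)^{K}\bigr),\ 1\biggr\},
\]
which, together with the lower bound and the passage to the worst case over request arrivals, is the claimed two-sided bound. As in Theorem~\ref{t1}, all equalities are meant in the sense of being arbitrarily close for $F$ large enough.
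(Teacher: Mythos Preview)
Your proposal is correct. The structural identity you open with is exactly the paper's equation~\eqref{4} in Appendix~B, and your derivation of the lower bound and of the $\left\lceil B/\Delta b\right\rceil R_S$ half of the upper bound coincides with the paper's.

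The one genuine difference is how you obtain the $FK\!\left(1-M/N\right)$ half of the upper bound. The paper does \emph{not} use $\eta_{\mathcal S}\!\left(\Delta b\right)\le\left|\mathcal S\right|$. Instead it argues in two steps: first $R_A\!\left(M,N,K,\Delta b\right)\le R_A\!\left(M,N,K,1\right)$ by the monotonicity of $\eta_{\mathcal S}$ in $\Delta b$ (Remark~1/\eqref{partition}); second $R_A\!\left(M,N,K,1\right)\le R_U\!\left(M,N,K\right)$, because at $\Delta b=1$ the scheme is simply the Maddah--Ali--Niesen decentralized delivery applied independently on each time slot, and that is known to be no worse than uncoded per slot, hence after summing over slots. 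Your route---bounding $\eta_{\mathcal S}$ by $\left|\mathcal S\right|$ and computing $\sum_{s=1}^{K}s\binom{K}{s}\left(M/N\right)^{s-1}\left(1-M/N\right)^{K-s+1}=K\!\left(1-M/N\right)$ via $s\binom{K}{s}=K\binom{K-1}{s-1}$---is shorter and fully self-contained: it needs neither the monotonicity step nor the external per-slot comparison with uncoded caching. The paper's route, in exchange, makes the operational picture explicit (that $\Delta b=1$ is the worst delay regime and already beats uncoded). Both arrive at the same $\min\{\cdot,\cdot\}$, so either argument completes the proof.
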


\begin{proof}
Please see appendix B.
\end{proof}

{
It can be readily seen that the lower bound of the fronthual load of our proposed scheme is the same as that of the Maddah-Ali-Niesen's decentralized synchronous coded caching scheme, and also does not consider the asynchronous case, which is of great challenge.
According to (15), our proposed scheme has the same fronthaul load as the Maddah-Ali-Niesen's decentralized scheme when $\Delta b = B $ and $\left| {{{\cal U}_b}} \right| = L$.
It can be seen that the lower bound can indeed be achieved for the above special case.
However, as can be seen from (8)-(12), even for the special case with $\left| {{{\cal U}_b}} \right| = L$,
the expression of the fronthual load of our proposed scheme, i.e., $R_A (M,N,K,\Delta b)$, is already extremely complicated.
Therefore, we can readily see that it will be very difficult to obtain the expression or the lower bound (considering the asynchronous case) of $R_A (M,N,K,\Delta b)$ for the general case with random $\left| {{{\cal U}_b}} \right|$.
}

Comparing the fronthaul load  of our proposed scheme  with that of the Maddah-Ali-Niesen's decentralized scheme, we have the following corollary.

\begin{corollary} \label{cor}
\begin{equation}
1 \le \frac{{R_A\left( {M,N,K,\Delta b} \right)}}{{{R_S}\left( M,N,K \right)}} \le \left\lceil {B/\Delta b} \right\rceil, \quad N \ge K.
\end{equation}
\end{corollary}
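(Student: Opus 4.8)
The plan is to obtain Corollary~\ref{cor} as a direct consequence of Theorem~\ref{t2}, simply by dividing the two-sided bound \eqref{theo2} through by ${R_S}\left(M,N,K\right)$. First I would recall from \eqref{syn_load} the closed-form expression
\[
{R_S}\left(M,N,K\right) = FK\left(1 - M/N\right)\frac{N}{KM}\left(1 - \left(1 - M/N\right)^{K}\right), \quad N \ge K,
\]
and note that this quantity is strictly positive for $0 < M < N$ and $N \ge K$, so division is legitimate.

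For the left inequality, the lower bound ${R_S}\left(M,N,K\right) \le {R_A}\left(M,N,K,\Delta b\right)$ in Theorem~\ref{t2} immediately gives ${R_A}/{R_S} \ge 1$. For the right inequality, I would take the upper bound in \eqref{theo2},
\[
{R_A}\left(M,N,K,\Delta b\right) \le FK\left(1 - M/N\right)\min\left\{\left\lceil B/\Delta b\right\rceil \frac{N}{KM}\left(1 - \left(1 - M/N\right)^{K}\right),\, 1\right\},
\]
drop the $\min$ by keeping only its first argument (this is a valid upper bound since $\min\{a,b\} \le a$), and divide by ${R_S}\left(M,N,K\right)$. The factor $FK\left(1 - M/N\right)\frac{N}{KM}\left(1 - \left(1 - M/N\right)^{K}\right)$ cancels exactly against ${R_S}\left(M,N,K\right)$, leaving $\left\lceil B/\Delta b\right\rceil$. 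This yields $1 \le {R_A}\left(M,N,K,\Delta b\right)/{R_S}\left(M,N,K\right) \le \left\lceil B/\Delta b\right\rceil$ for $N \ge K$, which is exactly the claim.

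There is essentially no obstacle here: the corollary is a one-line algebraic rearrangement of Theorem~\ref{t2} together with the explicit form of ${R_S}$ in \eqref{syn_load}. The only point requiring a word of care is the positivity of ${R_S}\left(M,N,K\right)$ (so that dividing preserves the inequalities), which holds because each factor $F$, $K$, $1-M/N$, $N/(KM)$, and $1-\left(1-M/N\right)^{K}$ is positive under the standing assumptions $0 < M < N$ and $K \ge 1$; and the fact that the first argument of the $\min$ in \eqref{theo2} is an upper bound for the $\min$ itself, which is immediate. Hence the proof is just: invoke Theorem~\ref{t2}, substitute \eqref{syn_load}, discard the $\min$, and cancel common factors.
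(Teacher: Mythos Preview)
Your proposal is correct and matches the paper's approach: the paper presents Corollary~\ref{cor} as an immediate consequence of Theorem~\ref{t2} without a separate proof, and indeed in the proof of Theorem~\ref{t2} (Appendix~B) the intermediate inequalities \eqref{AS} and \eqref{5} already read $R_S \le R_A$ and $R_A \le \lceil B/\Delta b\rceil R_S$, which is exactly what you obtain after dropping the $\min$ and dividing through by $R_S$.
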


As for the general case of asynchronous requests, Theorem \ref{t2} and Corollary \ref{cor} show that the fronthaul load of our proposed scheme is at most a factor $\left\lceil {B/\Delta b} \right\rceil $ larger than that of the Maddah-Ali-Niesen's decentralized  scheme.
When $\Delta b < B$, the performance gap between our proposed scheme and the Maddah-Ali-Niesen's decentralized  scheme is due to the fact that asynchronous requests  lead to the loss of coded-multicasting opportunities, which is also the key difference between asynchronous and synchronous coded caching.

According to  (\ref{partition}), the number of partitioned  encoding subsets for an encoding set with a smaller $\Delta b$ is relatively larger, which can be  illustrated in Fig. \ref{encoding_set}. Note that $R_A\left( {M,N,K,\Delta b} \right)$ can be calculated by accumulating the sizes of  the  coded-multicasting contents corresponding to the subsets that all the encoding sets are partitioned into.
Correspondingly, $R_A\left( {M,N,K,\Delta b} \right)$  increases
with $\Delta b$, since the size of the coded-multicasting content transmitted for an encoding set by the cloud server is unchanged.
Moreover, for $M \in \left[ {{N \mathord{\left/
 {\vphantom {N K}} \right.
 \kern-\nulldelimiterspace} K},N} \right]$,  the fronthaul load of the Maddah-Ali-Niesen's  decentralized scheme can be up to a factor $K$ smaller than that of the uncoded caching scheme \cite{Maddah-Ali2}. Besides, when the number of F-APs increases for $M \in \left[ {{N \mathord{\left/
 {\vphantom {N K}} \right.
 \kern-\nulldelimiterspace} K},N} \right]$, the maximum request delay can be set to a relatively smaller value with the same considerable coded-multicasting opportunities created.

\begin{figure}[!t]
\centering 
\includegraphics[width=0.5\textwidth]{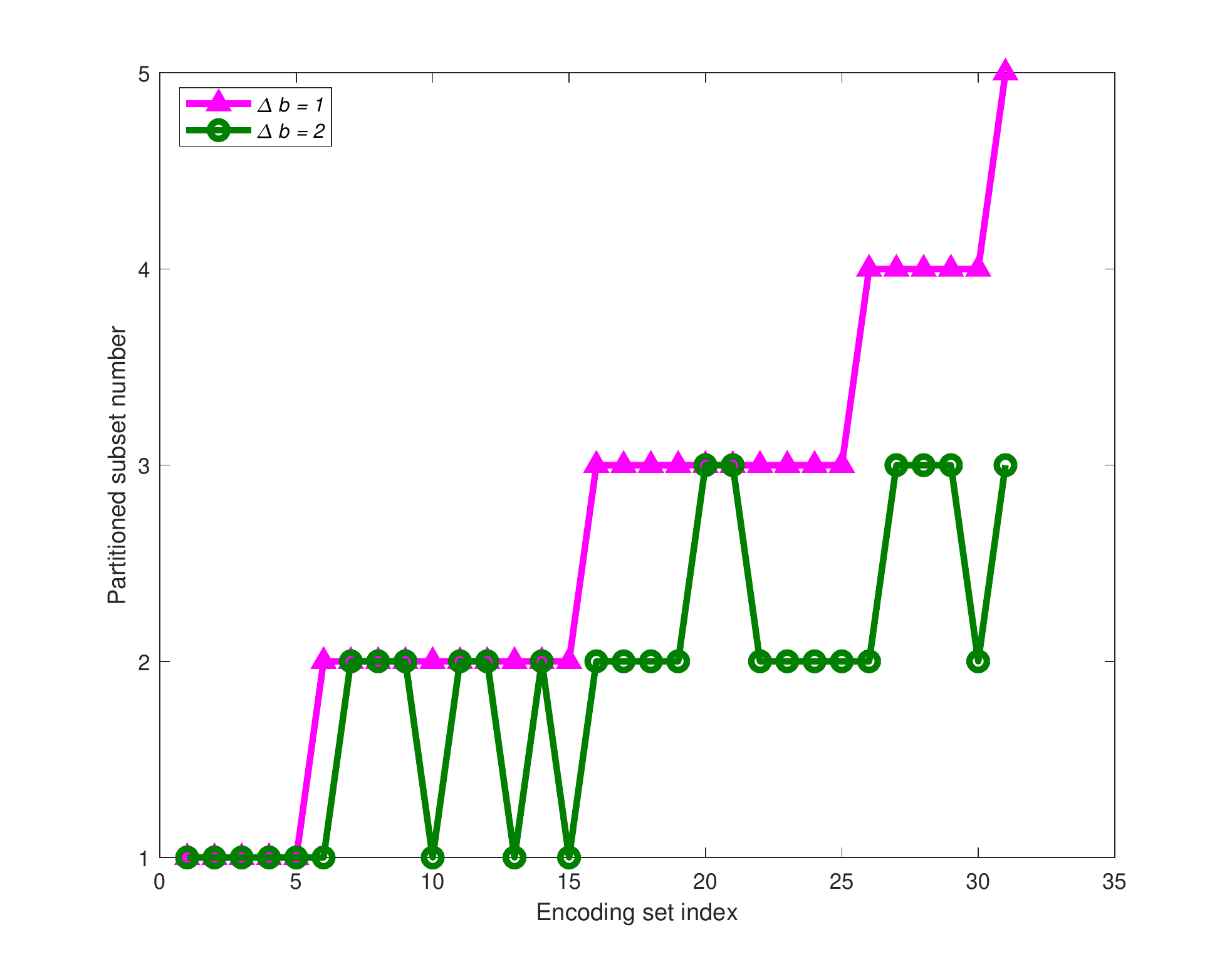}
\caption{The number of  encoding subsets such that each encoding set can be partitioned into with  $B = 5$ and  ${\cal U}_b = \left\{ b \right\}$.} \label{encoding_set}
\end{figure}

\section{Simulation Results}

    In this section, the performance of our  proposed decentralized asynchronous coded caching scheme is evaluated via simulations.
We adopt  the Maddah-Ali-Niesen's decentralized scheme and the uncoded caching scheme as   baselines.
The system parameters are set as follows: $F = 1 \  \rm Gb$, $N = 100$,  $T = 10 \ \rm s$, $B = 5$.

\begin{figure}[!t]
\centering 
\includegraphics[width=0.5\textwidth]{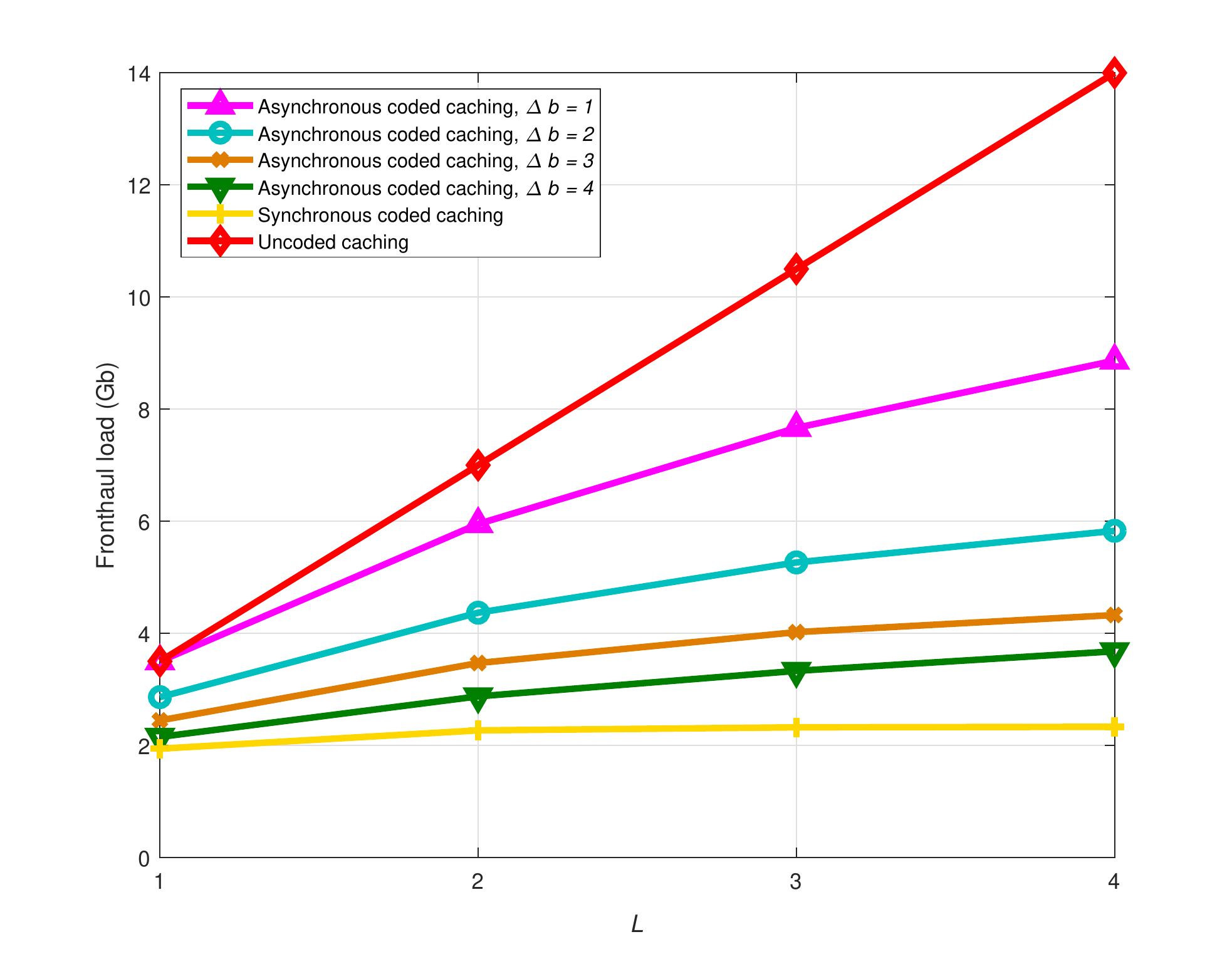}
\caption{ Fronthaul load versus $L$ with  $M=30$. } \label{user_load}
\end{figure}

\begin{small}
\begin{figure}[!t]
\centering 
\includegraphics[width=0.5\textwidth]{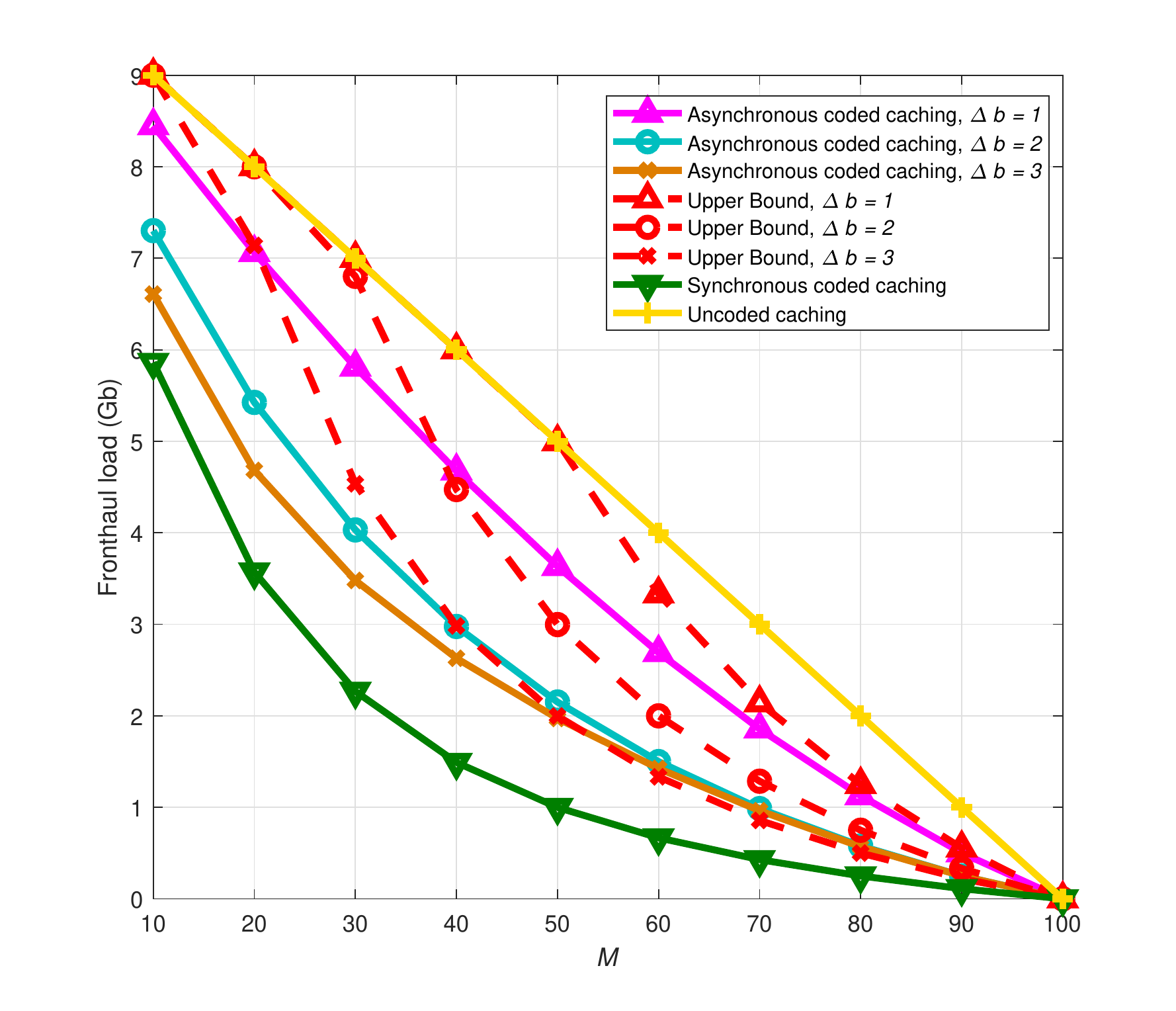}
\caption{ Fronthaul load versus $M$ with $K = 10$.} \label{cache_load}
\end{figure}
\end{small}

\begin{figure}[!t]
\centering 
\includegraphics[width=0.47\textwidth]{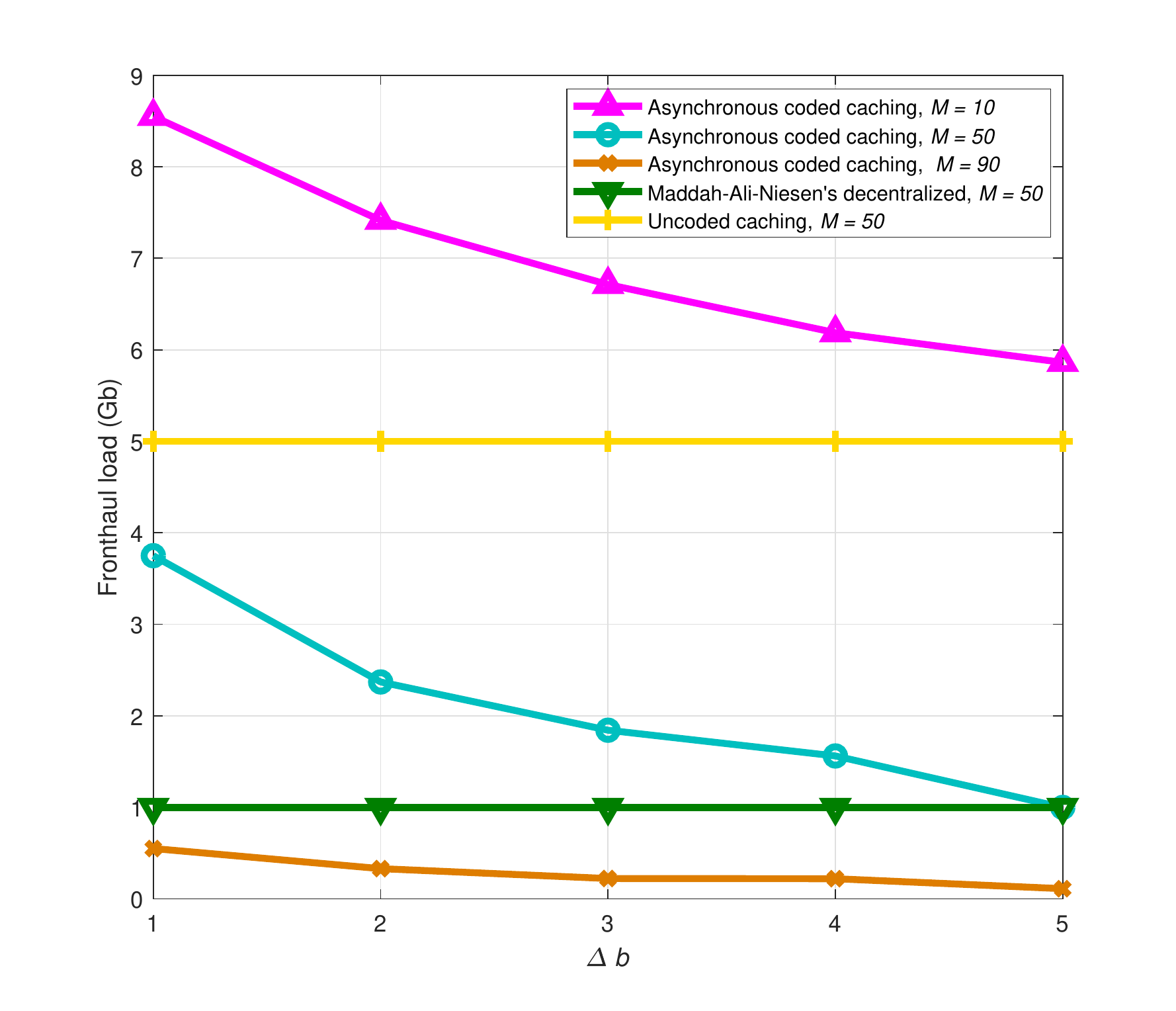}
\caption{ Fronthaul load versus $\Delta b$ for varying cache sizes with $K = 10$. } \label{joint}
\end{figure}

\begin{figure}[!t]
\centering 
\includegraphics[width=0.48\textwidth]{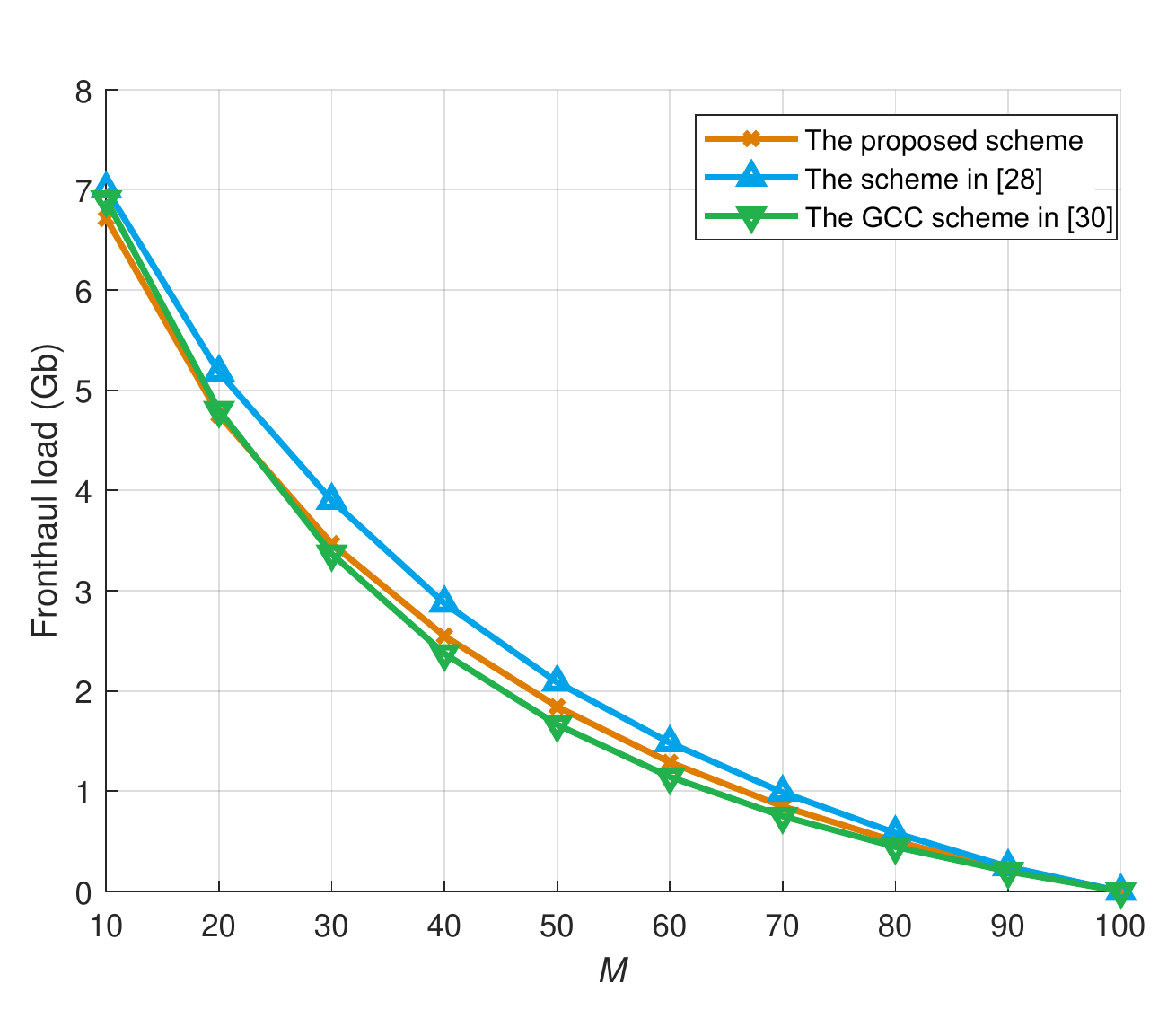}
\caption{Fronthaul load versus $M$ for different asynchronous coded caching schemes.} \label{fig-add}
\end{figure}

In Fig. \ref{user_load}, we show the effect of the number of F-APs requesting contents during each time slot  on the fronthaul load of each scheme with $M = 30$ for different $\Delta b$. As the request distribution affects the fronthaul load, we consider  the special case with $\left| {{{\cal U}_b}} \right| = L$.
As shown,
 the fronthaul load of our proposed scheme  increases more slowly with $L$ compared with that of the uncoded caching scheme.
 The reason for this result is that more coded-multicasting opportunities can be created when $L$ increases, i.e., $K$ increases.
 Correspondingly, when $K$ increases for $M \in \left[ {{N \mathord{\left/
 {\vphantom {N K}} \right.
 \kern-\nulldelimiterspace} K},N} \right]$, the maximum request delay can be set to a relatively smaller value while the same considerable coded-multicasting opportunities are created.

In Fig. \ref{cache_load}, we show the effect of the normalized cache size of each F-AP, i.e., $M$, on the fronthaul load of each scheme with $K = 10$ for different $\Delta b$.
Here we consider the general case with random $\left| {{{\cal U}_b}} \right|$.
 As shown, our  proposed  scheme can create considerable coded-multicasting opportunities compared with the uncoded caching scheme.
Moreover, the fronthaul load decreases and its slope increases when $M$ increases, which is the same as  the Maddah-Ali-Niesen's decentralized  scheme.
Furthermore, the fronthaul load of our proposed scheme is between the lower bound, i.e., the fronthaul load of the Maddah-Ali-Niesen's decentralized synchronous coded caching scheme, and the upper bound, and approaches the upper bound when $M$ increases.

In Fig. \ref{joint}, we show how $\Delta b$  affects the fronthaul load of  each scheme for varying cache sizes with random
 $\left| {{{\cal U}_b}} \right|$ and $K = 10$.
As shown,  the fronthaul load of our proposed scheme decreases with
$\Delta b$, which means that our proposed scheme can create more coded-multicasting opportunities with
a relaxed delay requirement.
Furthermore,  the larger $\Delta b$ is, the more the decrease of the fronthaul load of our proposed scheme is  in comparison with that of the uncoded caching scheme.
The performance gap between the fronthaul load of our proposed scheme and that of the Maddah-Ali-Niesen's decentralized
 scheme is smaller when $\Delta b$ is larger.
The reason for the above results is that a larger  $\Delta b$  leads to   a smaller number of  the partitioned subsets.
Besides,
as $\Delta b$ determines the maximum request delay,
it can be set to a relatively small value  in delay-sensitive scenarios and adjusted flexibly to achieve the  load-delay tradeoff in other scenarios.

In Fig. \ref{fig-add}, we show the performance comparison among our proposed scheme, the scheme in \cite{Niesen2}, and the GCC scheme in \cite{add8}.
In the simulations, the maximum request delay $\Delta b$ is set to $3$ for our proposed scheme and the threshold $\tau$ is set to $3$ for the scheme in \cite{Niesen2}.
As shown, the performance of our proposed scheme is better than that in \cite{Niesen2}.
The reason is that
the scheme in \cite{Niesen2} puts all requests into a sequence, and
a new request can be merged with the queued requests but may miss to merge with the upcoming requests for the $\tau$-fit threshold rule,
which is more appropriate to the situations where the sequence of requests is short, the file size is small, or the number of users is not large.
{
Although the performance gap between our proposed scheme and the scheme in \cite{Niesen2} with $\tau =3$ is small, it is not always the case.
The performance gap will become larger when the maximum request delay $\Delta b$, the threshold $\tau$, the queue length or the F-AP number is larger.
According to \cite{Niesen2}, it can be readily verified that our proposed scheme satisfies the perfect-fit rule with $\tau=0$, i.e., 0-fit threshold rule.
Moreover, a larger $\tau$ or queue length can indeed bring larger performance gap.
Besides, with a larger $\Delta b$ or F-AP number, the coded-multicasting opportunities among the considered F-APs will  increase.
Correspondingly, the scheme in \cite{Niesen2} will result in a larger probability to miss the coded-multicasting opportunities, and the performance gap will become larger.}
Furthermore, the GCC scheme  in \cite{add8} has a slight better performance than our proposed scheme when $M \ge 20$.
The reason is that the GCC scheme is centralized and the server knows the identity and the number of users exactly in the placement phase.
Correspondingly, it can be carefully designed to create more coded-multicasting opportunities.
However, the GCC scheme  in \cite{add8} cannot be applied to the networks with a variable number of users whereas both our proposed scheme and the scheme in  \cite{Niesen2} can be due to their decentralized property.
{
As for the complexity, in the delivery phase, the GCC scheme in \cite{add8} and our proposed scheme need some loops to encode the subfiles in each time slot, and the number of loops in a time slot can be calculated to be
   $\left\{\begin{array}{ll}
\mathcal{O}\left ( \binom{K}{\frac{KM}{N}+1}-\binom{K-\Delta bL}{\frac{KM}{N}+1} \right ), & \Delta bL\leq K\left ( 1-\frac{M}{N} \right )-1,\\
\mathcal{O}\left ( \binom{K}{\frac{KM}{N}+1}\right ), & \text{else}.
\end{array}\right.$
    and $\mathcal{O}(2^K-2^{K-L}) $, respectively.
    In comparison, the scheme in \cite{Niesen2}
    needs to traverse the sequence of requests multiple times, and the maximum number of traversals in a time slot can be calculated to be
    $\mathcal{O}\left(1/2 LF'(1-M/N)[2LF'(1-M/N)\Delta b+(L-1)]\right)$, where $F'$ denotes the number of partitioned subfiles for each file.
}

\section{Conclusions}

In this paper, we have proposed a decentralized asynchronous coded caching scheme for the online case in F-RANs where users asynchronously request contents with the maximum request delay.
Our proposed  scheme provides asynchronous and synchronous transmission methods to fulfill the  delay requirements of different practical scenarios.
The analytical results have shown that
the fronthaul load of our proposed scheme is at
most a constant factor larger than that of the Maddah-Ali-
Niesen's decentralized  scheme for a given maximum request
delay.
The simulation results have shown that more coded-multicasting opportunities can be created when the maximum request delay increases in asynchronous request scenarios.
For the future work,
we would like to explore asynchronous coded caching with a
nonuniform popularity distribution.

%


\bibliographystyle{IEEEtran}
\bibliography{codedcaching}

\begin{biography}[{\includegraphics[width=1in,height
=1.25in,clip,keepaspectratio]{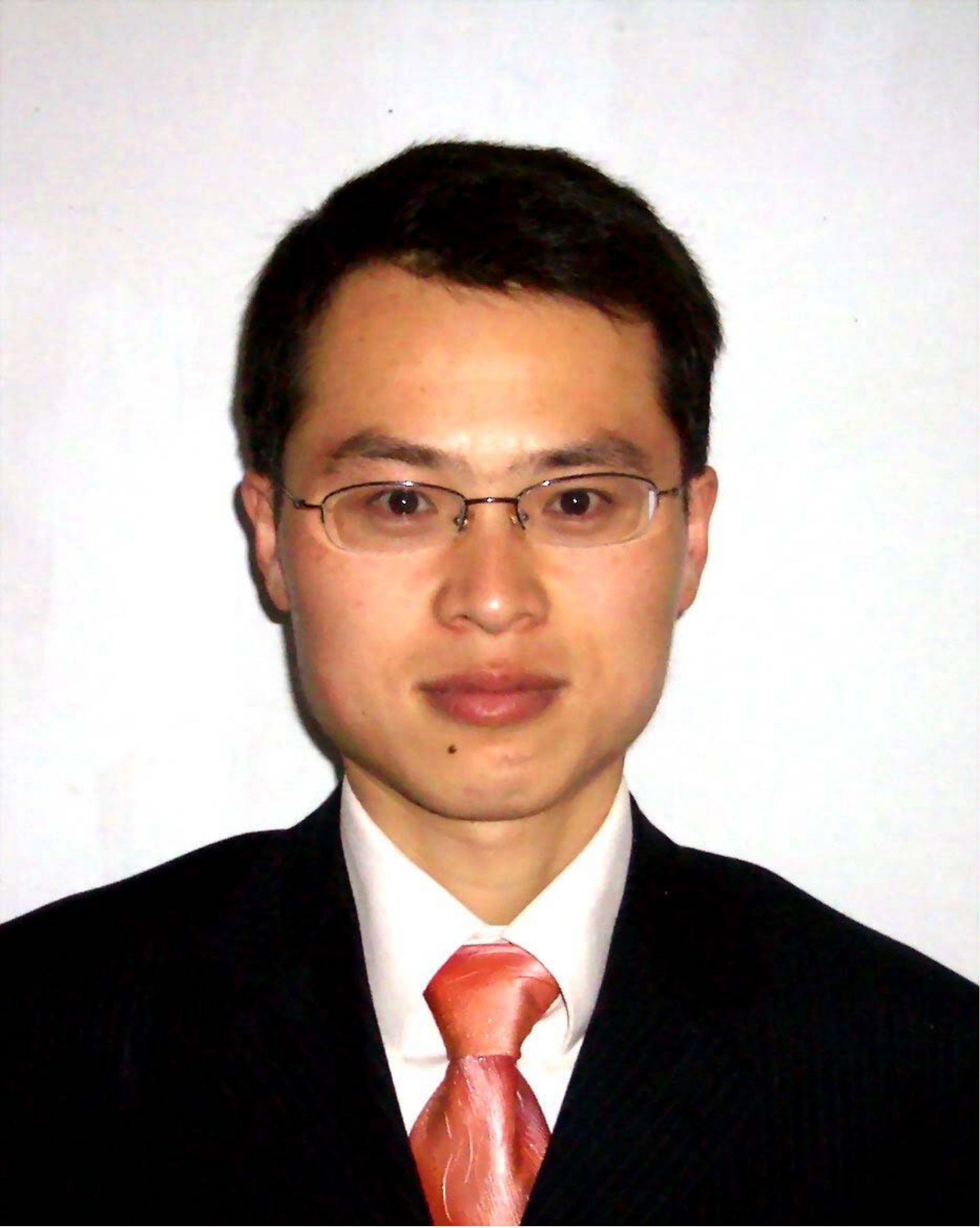}}]
{Yanxiang Jiang (S'03-M'07-SM'18)} received the B.S. degree in electrical engineering from Nanjing University, Nanjing, China, in 1999 and the M.S. and Ph.D. degrees in communications and information systems from Southeast University, Nanjing, China, in 2003 and 2007, respectively.

Dr. Jiang was a Visiting Scholar with the Signals and Information Group, Department of Electrical and Computer Engineering, University of Maryland at College Park, College Park, MD, USA, in 2014. He is currently an Associate Professor with the National Mobile Communications Research Laboratory, Southeast University, Nanjing, China. His research interests are in the area of broadband wireless mobile communications, covering topics such as edge caching, radio resource allocation and management, fog radio access networks, small cells and heterogeneous networks, cooperative communications, green communications, device to device communications, massive MIMO, and machine learning for wireless communications.
\end{biography}

\begin{biography}
{Wenlong Huang}
is currently pursuing the M.S. degree in communications and information systems from Southeast University, Nanjing, China.

His research interests include radio resource management and edge caching.
\end{biography}


\begin{biography}[{\includegraphics[width=1in,height
=1.25in,clip,keepaspectratio]{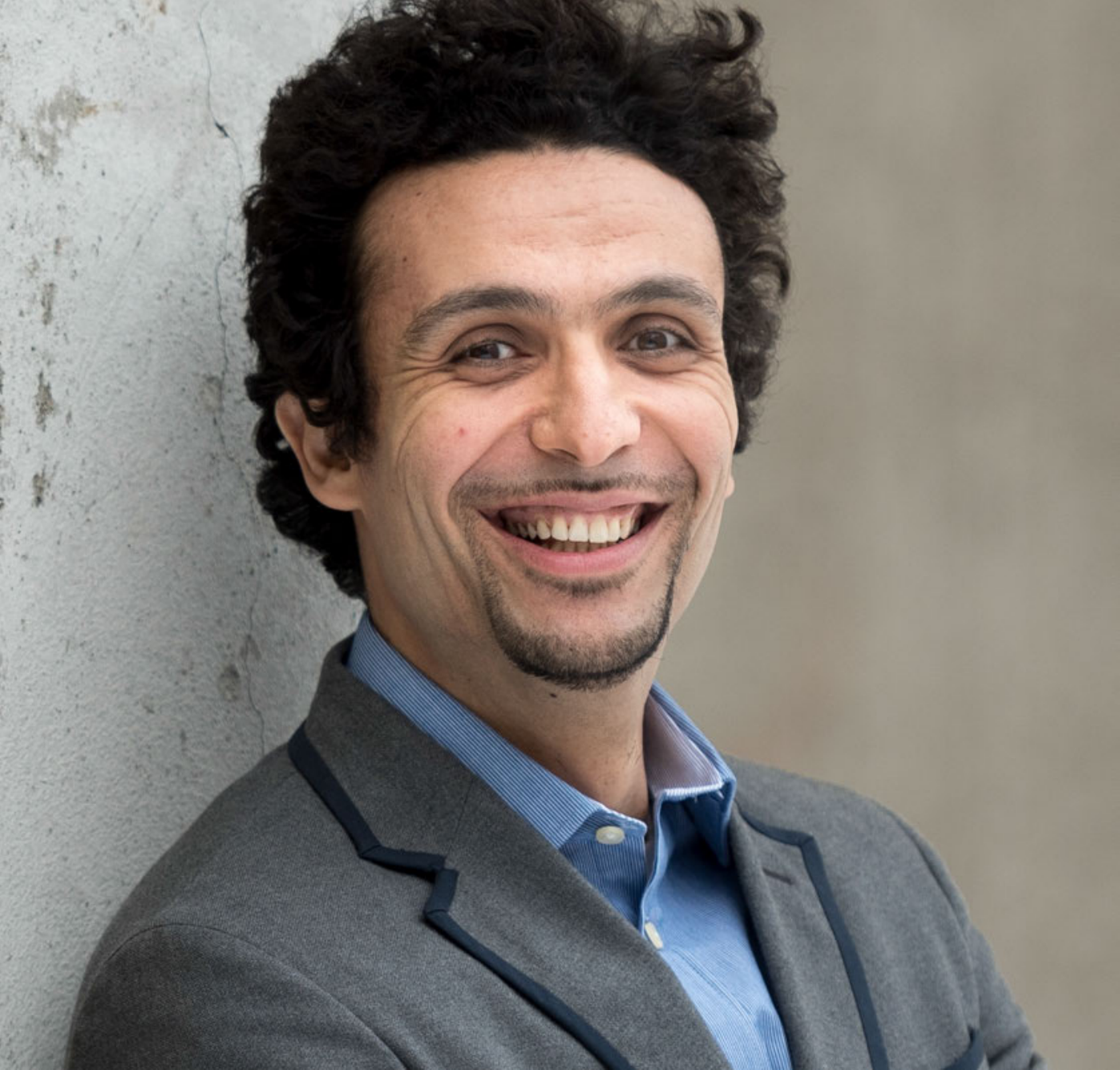}}]
{Mehdi Bennis (S'07-AM'08-SM'15)}
received his M.Sc. degree in electrical engineering jointly from EPFL, Switzerland, and the Eurecom Institute, France, in 2002. He obtained his Ph.D. from the University of Oulu in December 2009 on spectrum sharing for future mobile  cellular systems. Currently he is an associate professor at the  University of Oulu and an Academy of Finland research fellow.  His main research interests are in radio resource management, heterogeneous networks, game theory, and machine learning  in 5G networks and beyond. He has co-authored one book  and published more than 200 research papers in international  conferences, journals, and book chapters. He was the recipient  of the prestigious 2015 Fred W. Ellersick Prize from the IEEE  Communications Society, the 2016 Best Tutorial Prize from  the IEEE Communications Society, the 2017 EURASIP Best  Paper Award for the Journal of Wireless Communications and  Networks, and the 2017 all-University of Oulu Award for Research.
\end{biography}


\begin{biography}[{\includegraphics[width=1in,height
=1.25in,clip,keepaspectratio]{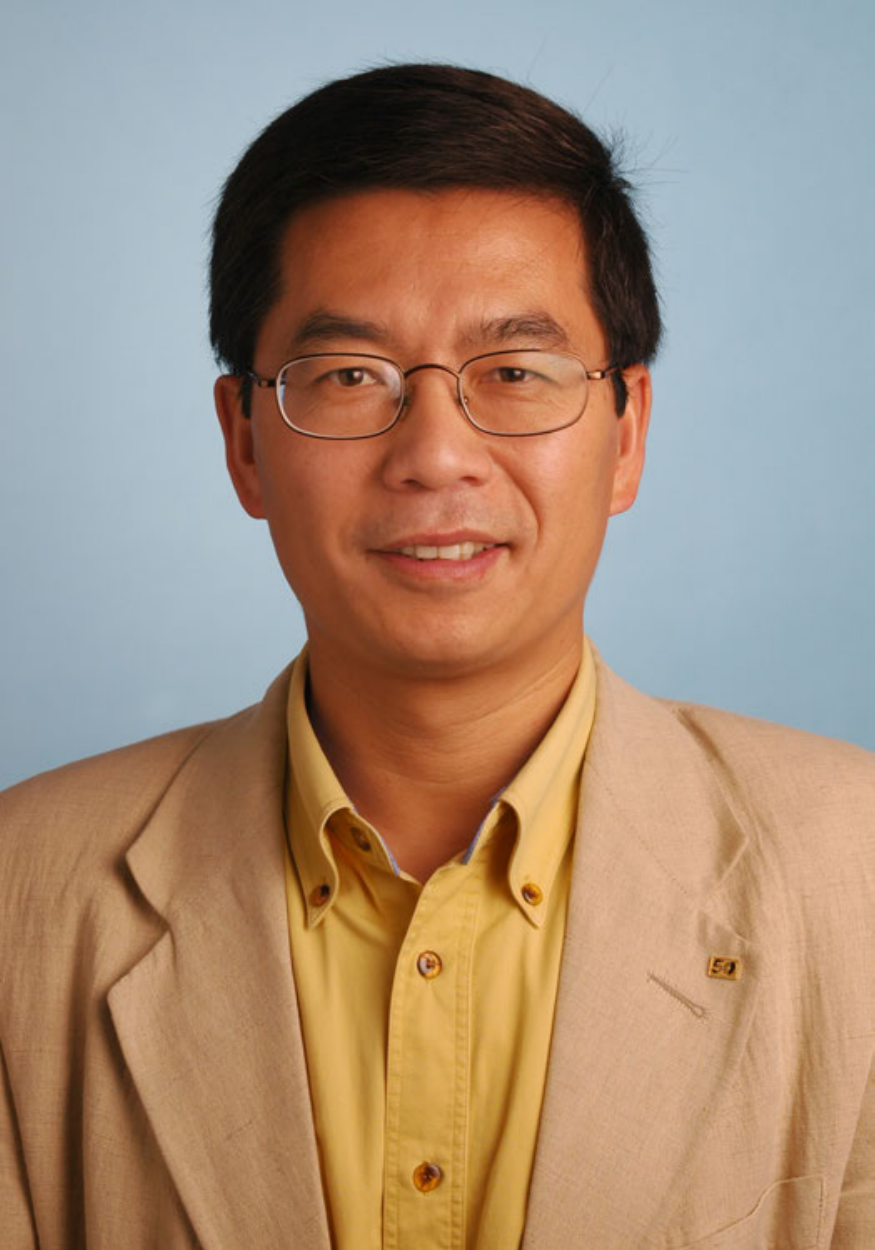}}]
{Fu-Chun Zheng (M'95-SM'99)}
obtained the BEng (1985) and MEng (1988) degrees in radio engineering from Harbin Institute of Technology, China, and the PhD degree in Electrical Engineering from the University of Edinburgh, UK, in 1992.

From 1992 to 1995, he was a post-doctoral research associate with the University of Bradford, UK, Between May 1995 and August 2007, he was with Victoria University, Melbourne, Australia, first as a lecturer and then as an associate professor in mobile communications.  He was with the University of Reading, UK, from September 2007 to July 2016 as a Professor (Chair) of Signal Processing. He has also been a distinguished adjunct professor with Southeast University, China, since 2010. Since August 2016, he has been a distinguished professor with Harbin Institute of Technology (Shenzhen), China and the University of York, UK. He has been awarded two UK EPSRC Visiting Fellowships - both hosted by the University of York (UK): first in August 2002 and then again in August 2006. Over the past two decades, Dr Zheng has also carried out many government and industry sponsored research projects - in Australia, the UK, and China. He has been both a short term visiting fellow and a long term visiting research fellow with British Telecom, UK. Dr Zheng's current research interests include signal processing for communications, multiple antenna systems, green communications, and ultra-dense networks.

He has been an active IEEE member since 1995. He was an editor (2001 - 2004) of IEEE Transactions on Wireless Communications. In 2006, Dr Zheng served as the general chair of IEEE VTC 2006-S, Melbourne, Australia (www.ieeevtc.org/vtc2006spring) - the first ever VTC held in the southern hemisphere in VTC's history of six decades. More recently he was the executive TPC Chair for VTC 2016-S, Nanjing, China (the first ever VTC held in mainland China: www.ieeevtc.org/vtc2016spring).

\end{biography}


\begin{appendices}
      \section{PROOF OF THEOREM 1  }

$R_A\left( {M,N,K,\Delta b} \right)$ can be calculated by accumulating the sizes of  all the  coded-multicasting contents transmitted by the cloud server.
Recall that there are $K$ types of encoding sets in total in the Maddah-Ali-Niesen's decentralized scheme, and there are
\begin{small}
$\left( {\begin{array}{*{20}{c}}
K\\
s
\end{array}} \right)$
\end{small}
encoding sets for  each type $s$.
Moreover, the size of the coded-multicasting content for any encoding set of type $s$ is identical, and different encoding sets of type $s$ can be partitioned into the same number of subsets.
   Note that the number of subsets that an encoding set of type $s$ can be  partitioned into ranges from ${\left\lceil {\frac{{s }}{{\Delta b \cdot L}}} \right\rceil }$ to ${\min \left\{ {\left\lceil {\frac{B}{{\Delta b}}} \right\rceil ,s } \right\}}$.
  Let $q\left( {s,Y,\Delta b } \right)$  denote the number of encoding sets of type $s$, each of which can be partitioned into $Y \in \left\{ {\left\lceil {\frac{{s }}{{\Delta b \cdot L}}} \right\rceil ,\left\lceil {\frac{{s }}{{\Delta b \cdot L}}} \right\rceil  + 1, \ldots ,\min \left\{ {\left\lceil {\frac{B}{{\Delta b}}} \right\rceil ,s } \right\}} \right\}$ subsets.
  Let $Q\left( s,\Delta b \right)$  denote the number of subsets that all the encoding sets of type $s$ is partitioned into.

  In the following,
  $b\left( {Y,\alpha } \right)$ and $c\left( {g,e} \right)$ are introduced firstly. Secondly, ${q_1}\left( {s,Y,{{\Delta b}^\prime},\Delta b } \right)$ and  ${q_2}\left( {s,Y,\Delta b } \right)$ are presented. Thirdly,
 for each type $s$, $q\left( {s,Y,\Delta b } \right)$ can be obtained for any $Y$ by using ${q_1}\left( {s,Y,{{\Delta b}^\prime},\Delta b } \right)$ and  ${q_2}\left( {s,Y,\Delta b } \right)$.
 Then,  $Q\left( s,\Delta b \right)$  can be obtained by summing $q\left( {s,Y,\Delta b} \right)  Y $ over all the possible values of $Y$. Finally, $R_A\left( {M,N,K,\Delta b} \right)$ can be calculated by summing ${Q\left( s,\Delta b \right)}  \left| {{W_{k,{
{{\cal S} \backslash \left\{ k \right\}}}}}} \right|$
 over all the possible values of $s$.

\subsection{ $b\left( {Y,\alpha } \right)$ and $c\left( {g,e} \right)$ }

\begin{enumerate}

\item

  We regard  $\Delta b$  consecutive time slots as a whole, which is called a big time slot.
  Let $Y$ encoding subsets correspond to  $Y$ big time slots, which are denoted by ${{{\cal B}}_1},{{{\cal B}}_1},\ldots,{{{\cal B}}_y},\ldots{{{\cal B}}_{Y}}$ and distributed during $B$ time slots. In other words, the F-APs in the $y$-th encoding subset request contents during big time slot ${{{\cal B}}_y}$.
  Let ${b_{{{\cal B}_y},1}}$ denote the index of the first time slot of big time slot ${{{\cal B}}_y}$.
  Accordingly, let ${\cal U}_{{b_{{{\cal B}_y},1}}}$ and ${\cal U}_{{{\cal B}}_y}$  denote the index set of the F-APs that request contents during time slot ${b_{{{\cal B}_y},1}}$ and big time slot ${{{\cal B}}_y}$, respectively. Then, ${{\cal U}_{{{\cal B}_y}}} = \mathop  \cup \nolimits_{b = {{b_{{{\cal B}_y},1}}}}^{{{b_{{{\cal B}_y},1}}} + \Delta b - 1} {{\cal U}_b}$.

Let $b\left( {Y,\alpha } \right)$ with $Y \le \alpha  \le Y  L$ denote the number of all the possible F-AP sets by choosing  $\alpha $  F-APs from all the F-APs that request contents  during   time slot ${b_{{{\cal B}_1},1}},{b_{{{\cal B}_2},1}},\ldots,{b_{{{\cal B}_y},1}},\ldots,{b_{{{\cal B}_Y},1}}$, i.e., the F-APs in $\mathop  \cup \nolimits_{y = 1}^Y {{\cal U}_{{{b_{{{\cal B}_y},1}}}}}$. Note that the number of F-APs in ${\cal U}_{{b_{{{\cal B}_y},1}}}$ that can be chosen ranges from $1$ to $\min \left\{ {L,\alpha - \left( {Y - 1} \right)} \right\}$, and at least one F-AP in ${{\cal U}_{{{b_{{{\cal B}_y},1}}}}}$ should be chosen according to our proposed encoding set partition method.
Then, $b\left( {Y,\alpha } \right)$ can be calculated  by considering the following four  cases.

\begin{itemize}
\item

When $Y = 1$ and  $1 \le \alpha  \le   L$, choose $\alpha$ F-APs from the $L$ F-APs. Then,
we have
\begin{equation} \label{b2}
b\left( {Y,\alpha } \right) = {\left( {\begin{array}{*{20}{c}}
L\\
\alpha
\end{array}} \right)}, \quad {Y = 1,1 \le \alpha  \le L}.
\end{equation}

\item
When $Y > 1$ and ${\alpha  = Y}$, it can be readily seen that only one F-AP can be chosen from ${\cal U}_{{b_{{{\cal B}_y},1}}}$, and the number of possible results is  $\begin{small}
\left( {\begin{array}{*{20}{c}}
L\\
1
\end{array}} \right)
\end{small}$. Since choosing an F-AP during each time slot is independent with each other, we have
\begin{equation} \label{b3}
b\left( {Y,\alpha } \right) = {\left( {\begin{array}{*{20}{c}}
L\\
1
\end{array}} \right)^Y}, \quad {Y > 1,\alpha  = Y}.
\end{equation}

\item
When $Y >1$, $L > 1$, and ${\alpha  = Y  L}$, it means that the corresponding $L$ F-APs are chosen from ${\cal U}_{{b_{{{\cal B}_y},1}}}$ since there are $Y  L$ F-APs requesting contents during the $Y$ time slots in total.
 Then, we have
\begin{equation} \label{b4}
b\left( {Y,\alpha } \right) = 1, \quad {Y > 1,L > 1,\alpha  = YL}.
\end{equation}

\item
Otherwise,
note that the number of F-APs that are chosen from ${\cal U}_{{b_{{{\cal B}_1},1}}}$, denoted by $v$,  ranges from  $1$ to $\min \left\{ {L,\alpha  - \left( {Y - 1} \right)} \right\}$. When $ v = 1$, the number of all the possible results by choosing $\alpha - 1$ F-APs from $\mathop  \cup \nolimits_{y = 2}^Y {{\cal U}_{{{b_{{{\cal B}_y},1}}}}}$ is $b\left( {Y - 1,\alpha  - 1} \right)$. Meanwhile,  the number of all the possible results by choosing one F-AP from ${\cal U}_{{b_{{{\cal B}_1},1}}}$ is
\begin{small}
$\left( {\begin{array}{*{20}{c}}
L\\
1
\end{array}} \right)$
\end{small}.
Since  choosing F-APs from ${\cal U}_{{b_{{{\cal B}_y},1}}}$ during  time slot ${b_{{{\cal B}_y},1}}$ is independent with each other, the number of F-AP sets with $ v = 1$ is
\begin{equation}
\left( {\begin{array}{*{20}{c}}
L\\
1
\end{array}} \right)b\left( {Y - 1,\alpha  - 1} \right).
\end{equation}

    Repeat the above operations until  $v = \min \left\{ {L,\alpha - \left( {Y - 1} \right)} \right\}$. Sum  the number of  F-AP sets over all the possible values of $v$. Then,  we have
    \begin{equation} \label{b1}
b\left( {Y,\alpha } \right) = \sum\limits_{v = 1}^{\min \left\{ {L,\alpha  - \left( {Y - 1} \right)} \right\}} {\left( {\begin{array}{*{20}{c}}
L\\
v
\end{array}} \right)  b\left( {Y - 1,\alpha  - v} \right)}, \quad \rm{else.}
\end{equation}

\end{itemize}

According to (\ref{b2})-(\ref{b1}), we can readily obtain (\ref{b}).

\item
Let  $c\left( {g,e} \right)$ denote the number of all the possible results when  $g$  big time slots are distributed into $e$ placement, each of which represents a kind of  partition.
 Calculating $c\left( {g,e} \right)$ is equivalent to distributing $g$ indistinguishable balls into $e$ distinguishable boxes, which belongs to the issue in combinatorial mathematics.
 Since there is no difference among the big time slots and the boxes can be empty, we have\begin{equation} \label{c}
c\left( {g,e} \right) = \left( {\begin{array}{*{20}{c}}
{g + e - 1}\\
{g}
\end{array}} \right).
\end{equation}

\end{enumerate}

\subsection{ ${q_1}\left( {s,Y,{{\Delta b}^\prime},\Delta b } \right)$ and  ${q_2}\left( {s,Y,\Delta b } \right)$}

Before calculating $q\left( {s,Y,\Delta b } \right)$ with ${\Delta b < B}$, two possible partition cases with ${\Delta b < B}$, as shown in Fig. \ref{partition_case}, needs to be considered.
Note that big time slot ${\cal B}_Y$ includes ${{\Delta b}^\prime} < \Delta b $  time slots  and $\Delta b $  time slots in  case 1 and case 2, respectively.
 Let ${q_1}\left( {s,Y,{{\Delta b}^\prime},\Delta b } \right)$  and ${q_2}\left( {s,Y,\Delta b } \right)$ denote the number of all the encoding sets whose partition results correspond to case $1$ and case $2$, respectively.
Let $d_1\left( Y,{{\Delta b}^\prime},\Delta b  \right)$ and $d_2\left( Y,\Delta b  \right)$  denote the number of all the possible results by distributing $Y $ big time slots into $ B$ time slots for case 1 and case 2, respectively.
Let  ${p_1}\left( {s,Y,{{\Delta b}^\prime},\Delta b} \right)$ and ${p_2}\left( {s,Y,\Delta b} \right)$ denote the number of all the possible encoding sets of type $s$ by choosing $s$ F-APs from the F-APs in $\mathop  \cup \nolimits_{y = 1}^Y {{\cal U}_{{{\cal B}_y}}}$
for case 1 and case 2, respectively.
 Then, ${q_1}\left( {s,Y,{{\Delta b}^\prime},\Delta b } \right)$  and ${q_2}\left( {s,Y,\Delta b } \right)$ can be calculated as follows.

\subsubsection{${q_1}\left( {s,Y,{{\Delta b}^\prime},\Delta b } \right)$}

 As  the calculations of $d_1\left( Y,{{\Delta b}^\prime},\Delta b  \right)$ and  $p_1\left( {s,Y,{{\Delta b}^\prime},\Delta b } \right)$ are  independent with each other, we have
        \begin{equation} \label{q1}
{q_1}\left( {s,Y,{{\Delta b}^\prime} ,\Delta b} \right) = d_1\left( {Y,{{\Delta b}^\prime},\Delta b } \right){p_1}\left( {s,Y,{{\Delta b}^\prime},\Delta b } \right).
\end{equation}

As for $d_1\left( Y,{{\Delta b}^\prime},\Delta b  \right)$, it is equal to the number of all the results by distributing $Y-1$ big time slots into $B - \left( {Y  - 1} \right)\Delta b - {{\Delta b}^\prime} + 1$ placements.
    According to (\ref{c}), we have
    \begin{equation} \label{d1}
\begin{split}
 d_1\left( Y,{{\Delta b}^\prime},\Delta b  \right)
   &  =   {c\left( {Y  - 1,B - \left( {Y  - 1} \right)\Delta b - {{\Delta b}^\prime} + 1} \right)}  \\
 &  =  {\left( {\begin{array}{*{20}{c}}
{B - {{\Delta b}^\prime} - \left( {Y  - 1} \right)\left( {\Delta b - 1} \right)}\\
{Y-1}
\end{array}} \right)}.
 \end{split}
\end{equation}

As for ${p_1}\left( {s,Y,{{\Delta b}^\prime},\Delta b } \right)$, it can be calculated in two steps. The first step is to calculate  $b\left( {Y ,\alpha }\right)$ with
\begin{multline}
\alpha  \in \left\{ \max \left\{ {Y,s  - \left( {\left( {Y - 1} \right)\Delta b + {{\Delta b}^\prime}   - Y} \right)L} \right\}, \right. \\
\left. \ldots,\min \left\{ {s ,YL} \right\} \right\}.
\end{multline}
The second step is to calculate the number of F-AP sets by choosing $s-\alpha$ F-APs that request contents during the remaining time slots of $Y$ big time slots, i.e.,
\begin{small}
${\left( {\begin{array}{*{20}{c}}
{\left( {\left( {Y - 1} \right)\Delta b + {{\Delta b}^\prime}  - Y} \right)L}\\
{s - \alpha  }
\end{array}} \right)}$.
\end{small}
Sum
 \begin{small}
 ${b\left( {Y ,\alpha } \right)\left( {\begin{array}{*{20}{c}}
{\left( {\left( {Y - 1} \right)\Delta b + {{\Delta b}^\prime}  - Y} \right)L}\\
{s - \alpha  }
\end{array}} \right)}$
\end{small}
 over all the possible values of $\alpha$. Then, we have
\begin{small}
    \begin{equation}\label{p1}
    \begin{split}
&{p_1}\left( {s,Y,{{\Delta b}^\prime},\Delta b } \right) \\
&= \sum\limits_{\alpha  = \max \left\{ {Y,s  - \left( {\left( {Y - 1} \right)\Delta b + {{\Delta b}^\prime}  - Y} \right)L} \right\}}^{\min \left\{ {s ,YL} \right\}} {b\left( {Y,\alpha } \right)\left( {\begin{array}{*{20}{c}}
{\left( {\left( {Y - 1} \right)\Delta b + {{\Delta b}^\prime}  - Y} \right)L}\\
{s - \alpha  }
\end{array}} \right)} .
\end{split}
\end{equation}
\end{small}

According to (\ref{q1})-(\ref{d1}) and (\ref{p1}), we can readily obtain (\ref{q1Theo1}).

 \begin{figure}[!t]
\centering
\subfigure[Case 1]{
\label{case_a} 
\includegraphics[width=3in]{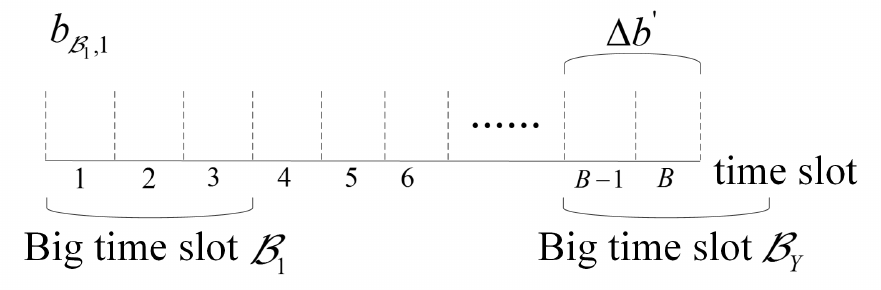}}
\hspace{1in}
\subfigure[Case 2]{
\label{case_b} 
\includegraphics[width=3in]{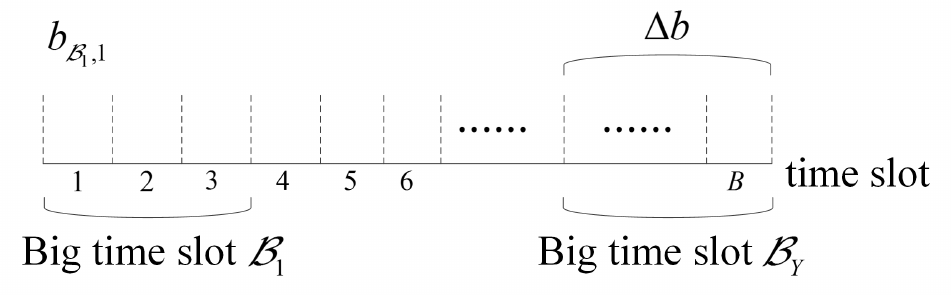}}
\caption{Two partition cases with ${\Delta b < B}$.}
\label{partition_case} 
\end{figure}

\subsubsection{${q_2}\left( {s,Y,\Delta b } \right)$ }

Similarly, we have
\begin{equation} \label{q2d2p2}
{q_2}\left( {s,Y,\Delta b} \right) = {d_2}\left( {Y,\Delta b} \right){p_2}\left( {s,Y,\Delta b} \right),
\end{equation}
 which can be calculated by  considering the following two cases, i.e., ${\Delta b = 1}$ and $1 < \Delta b < B$.
\begin{itemize}
\item

When ${\Delta b = 1}$, each big time slot only has one time slot. Then, we have
\begin{equation} \label{d21}
\begin{split}
 d_2\left( Y,\Delta b  \right)
   &     = c\left( {Y,B - Y  + 1 } \right) \\
 &    = \left( {\begin{array}{*{20}{c}}
B\\
Y
\end{array}} \right).
 \end{split}
\end{equation}
Similarly, we have
\begin{equation} \label{p21}
p_2\left( {s,Y,\Delta b } \right) = {b\left( {Y ,s } \right)}.
\end{equation}
According to (\ref{q2d2p2})-(\ref{p21}), we have
\begin{equation}  \label{q21}
{q_2}\left( {s,Y,\Delta b} \right) = \left( {\begin{array}{*{20}{c}}
B\\
Y
\end{array}} \right)b\left( {Y,s} \right), \quad {\Delta b = 1}.
\end{equation}

\item
When $1 < \Delta b < B$, $d_2\left( Y,\Delta b  \right)$ is equal to the number of all the possible results by distributing $Y$ big time slots into $B - Y\Delta b  + 1$ placements.
Similarly, we have
\begin{equation} \label{d2}
{d_2}\left( {Y,\Delta b} \right) = c\left( {Y,B - Y\Delta b + 1} \right) = \left( {\begin{array}{*{20}{c}}
{B - Y\left( {\Delta b - 1} \right)}\\
Y
\end{array}} \right),
\end{equation}
\begin{multline} \label{p2}
{p_2}\left( {s,Y,\Delta b} \right) = \\
\sum\limits_{\alpha  = \max \left\{ {Y,s  - Y\left( {\Delta b - 1} \right)L} \right\}}^{\min \left\{ {s, Y  L} \right\}} {b\left( {Y,\alpha } \right)\left( {\begin{array}{*{20}{c}}
{Y\left( {\Delta b - 1} \right)L}\\
{s  - \alpha }
\end{array}} \right)}.
\end{multline}

According to (\ref{q2d2p2}) and (\ref{d2})-(\ref{p2}), we have
\begin{small}
 \begin{equation} \label{q22}
\begin{array}{l}
{q_2}\left( {s,Y,\Delta b} \right) = \left( {\begin{array}{*{20}{c}}
{B - Y\left( {\Delta b - 1} \right)}\\
Y
\end{array}} \right)\\
\sum\limits_{\alpha  = \max \left\{ {Y,s  - Y\left( {\Delta b - 1} \right)L} \right\}}^{\min \left\{ {s,Y  L} \right\}} {b\left( {Y,\alpha } \right)\left( {\begin{array}{*{20}{c}}
{Y\left( {\Delta b - 1} \right)L}\\
{s  - \alpha }
\end{array}} \right)}
\end{array},  \quad 1 < \Delta b < B.
\end{equation}
\end{small}

\end{itemize}

Finally, According to (\ref{q21}) and (\ref{q22}), we can readily obtain (\ref{q2}).

\subsection{ $q\left( {s,Y,\Delta b } \right)$ }

When ${\Delta b = B}$,
according to our proposed encoding set partition method, any encoding set will be partitioned into one subset. As the number of all the encoding sets of type $s$ is
\begin{small}
${\left( {\begin{array}{*{20}{c}}
K\\
s
\end{array}} \right)}$
\end{small},  we have
\begin{equation} \label{qB}
q\left( {s,Y,\Delta b } \right) = {\left( {\begin{array}{*{20}{c}}
K\\
s
\end{array}} \right)}, \quad {\Delta b = B}.
\end{equation}

When ${\Delta b < B}$, $q\left( {s,Y,\Delta b } \right)$  can be calculated by considering the following four cases. Note that the number of time slots that big time slot ${\cal B}_Y$ includes is at least $\max \left\{ {\left\lceil {\frac{{s  - \left( {Y - 1} \right)L}}{L}} \right\rceil ,1} \right\}$ and at most $\min \left\{ {B - \left( {Y - 1} \right)\Delta b,\Delta b} \right\}$.
 \begin{itemize}
\item
When  ${\Delta b = 1}$ or
$\Delta b = \max \left\{ {\left\lceil {\frac{{s  - \left( {Y - 1} \right)L}}{L}} \right\rceil ,1} \right\} < B$, $q\left( {s,Y,\Delta b } \right)$  only includes the number of encoding sets of type $s$ whose partition results  correspond to case $2$. Then, we have
\begin{small}
     \begin{equation} \label{qb1}
q\left( {s,Y,\Delta b} \right) = \left\{ {\begin{array}{*{20}{c}}
{{q_2}\left( {s,Y,\Delta b} \right),}&{\Delta b = 1,}\\
{{q_2}\left( {s,Y,\Delta b} \right),}&{\Delta b = \max \left\{ {\left\lceil {\frac{{s  - \left( {Y - 1} \right)L}}{L}} \right\rceil ,1} \right\} < B.}
\end{array}} \right.
\end{equation}
\end{small}

\item  Similarly, when
$\max \left\{ {\left\lceil {\frac{{s  - \left( {Y - 1} \right)L}}{L}} \right\rceil ,1} \right\} < \Delta b \le B - \left( {Y - 1} \right)\Delta b$,
     both  case $1$ and case $2$ are included.

        As
        ${{\Delta b}^\prime}  \in \left\{ {\max \left\{ {\left\lceil {\frac{{s  - \left( {Y - 1} \right)L}}{L}} \right\rceil ,1} \right\},\ldots,{\Delta b} - 1} \right\}$
         for case 1,
      we have
      \begin{small}
      \begin{equation} \label{qb2}
      \begin{split}
q\left( {s,Y,\Delta b} \right) & = \sum\limits_{{{\Delta b}^\prime}  = \max \left\{ {\left\lceil {\frac{{s - \left( {Y - 1} \right)L}}{L}} \right\rceil ,1} \right\}}^{\Delta b - 1} {{q_1}\left( {s,Y,{{\Delta b}^\prime} ,\Delta b} \right)}  + {q_2}\left( {s,Y,\Delta b} \right),\\
&  \max \left\{ {\left\lceil {\frac{{s  - \left( {Y - 1} \right)L}}{L}} \right\rceil ,1} \right\} < \Delta b \le B - \left( {Y - 1} \right)\Delta b.
\end{split}
\end{equation}
\end{small}

\item When
$B - \left( {Y - 1} \right)\Delta b < \Delta b < B$, only case $1$ with ${{\Delta b}^\prime}  \in \left\{ {\max \left\{ {\left\lceil {\frac{{s  - \left( {Y - 1} \right)L}}{L}} \right\rceil ,1} \right\},\ldots,B - \left( {Y - 1} \right)\Delta b} \right\}$  is included.  Then, we have
    \begin{equation} \label{qb3}
    \begin{split}
q\left( {s,Y,\Delta b} \right) &= \sum\limits_{{{\Delta b}^\prime}  = \max \left\{ {\left\lceil {\frac{{s  - \left( {Y - 1} \right)L}}{L}} \right\rceil ,1} \right\}}^{B - \left( {Y - 1} \right)\Delta b} {{q_1}\left( {s,Y,{{\Delta b}^\prime} ,\Delta b} \right)}, \\
& \quad B - \left( {Y - 1} \right)\Delta b < \Delta b < B.
\end{split}
\end{equation}

\end{itemize}

According to (\ref{qB})-(\ref{qb3}), we  can readily obtain (\ref{q}).

\subsection{ $Q\left( s,\Delta b \right)$}

Recall that $Y $ is at least $\left\lceil {\frac{s}{{\Delta b \cdot L}}} \right\rceil  $  and at most $\min \left\{ {\left\lceil {B/\Delta b} \right\rceil ,s} \right\}$ for an encoding set of type $s$. Sum ${q\left( {s,Y,\Delta b} \right)}Y$ over all  the possible values of $Y $. Then, we have
 \begin{equation} \label{Q}
 Q\left( s,\Delta b \right) = \sum\limits_{Y = \left\lceil {\frac{{s }}{{\Delta b \cdot L}}} \right\rceil }^{\min \left\{ {\left\lceil {\frac{B}{{\Delta b}}} \right\rceil ,s } \right\}} {q\left( {s,Y,\Delta b} \right)} Y.
 \end{equation}

\subsection{ $R_A\left( {M,N,K,\Delta b} \right)$ with $\left| {{{\cal U}_b}} \right| = L$ and $  B \ge 3$ }

Finally,
sum  ${Q\left( s,\Delta b \right)} \left| {{W_{k,
{{\cal S} \backslash \left\{ k \right\}}}^{\rm a}}}\right|$ over all  the possible values of $s$. Then, we have
\begin{equation} \label{r1}
R_A\left( {M,N,K,\Delta b} \right) = \sum\limits_{s = 1}^{K} {Q\left( s,\Delta b \right)} \left| {{W_{k,
{{\cal S} \backslash \left\{ k \right\}}}^{\rm a}}} \right|, \quad \left| {{{\cal U}_b}} \right| = L,  B \ge 3.
\end{equation}

According to (\ref{Q}) and (\ref{r1}), we  can readily obtain (\ref{theo1}).

This completes the proof.

\section{PROOF OF THEOREM 2}

Accumulate the sizes of the coded-multicasting contents corresponding to all the subsets that all the encoding sets are partitioned into. Then, we have
\begin{equation} \label{4}
{R_A}\left( {M,N,K,\Delta b} \right) = F\sum\limits_{s = 1}^K {\sum\limits_{i = 1}^{\left( {\begin{array}{*{20}{c}}
K\\
s
\end{array}} \right)} {{\eta _{{\cal S}_i^\prime }}\left( {\Delta b} \right)\left| {W_{k,{\cal S}_i^\prime \backslash \left\{ k \right\}}^{\rm{a}}} \right|} } ,
\end{equation}
where ${{\cal S}^\prime_i}$ denotes the $i$-th element  of  the ${\left( {\begin{array}{*{20}{c}}
K\\
s
\end{array}} \right)}$ encoding sets of type $s $.

According to (\ref{partM}), (\ref{syn_load}), and (\ref{4}), we have
\begin{align}\label{AS}
 R_A\left( {M,N,K,\Delta b} \right) &\ge F\sum\limits_{s = 1}^K {\left( {\begin{array}{*{20}{c}}
K\\
s
\end{array}} \right){{\left( {M/N} \right)}^{s - 1}}{{\left( {1 - M/N} \right)}^{K - \left( {s - 1} \right)}}} \nonumber \\
&= {R_S}\left( {M,N,K} \right),
\end{align}

\begin{small}
\begin{align}\label{5}
 R_A\left( {M,N,K,\Delta b} \right) &\le F\sum\limits_{s = 1}^K {\left( {\begin{array}{*{20}{c}}
K\\
s
\end{array}} \right)\left\lceil {B/\Delta b} \right\rceil {{\left( {M/N} \right)}^{s - 1}}{{\left( {1 - M/N} \right)}^{K - \left( {s - 1} \right)}}} \nonumber \\
&= \left\lceil {B/\Delta b} \right\rceil {R_S}\left( {M,N,K} \right).
\end{align}
\end{small}

Let ${R_U}\left( M,N,K \right)$ denote the fronthaul load of the uncoded caching scheme. Then,  from \cite{Maddah-Ali2}, we have
\begin{equation}\label{6}
{R_U}\left( M,N,K \right) = F  K  \left( {1 - {M \mathord{\left/
 {\vphantom {M N}} \right.
 \kern-\nulldelimiterspace} N}} \right) , \quad N \ge K.
\end{equation}

As $R_A\left( {M,N,K,\Delta b} \right)$ increases  with $\Delta b$, we have
\begin{equation}\label{U1}
{R_A}\left( {M,N,K,\Delta b} \right) \le {R_A}\left( {M,N,K,1} \right).
\end{equation}
When $\Delta b = 1$, our proposed scheme is equivalent to the scheme applying the  Maddah-Ali-Niesen's decentralized scheme during each time slot.  From \cite{Maddah-Ali2}, we know that the fronthaul load of the  Maddah-Ali-Niesen's decentralized scheme is no more than that of the uncoded caching scheme  during each time slot.
\footnote{
Specifically, when $K = 1$, we have
${R_S}\left( {M,N,K} \right) = {R_U}\left( {M,N,K} \right).$
}
Then, accumulating the fronthaul load during $B$ time slots, we have
\begin{equation}\label{U2}
{R_A}\left( {M,N,K,1} \right) \le {R_U}\left( {M,N,K} \right).
\end{equation}
According to (\ref{U1}) and (\ref{U2}), we have
\begin{equation}
R_A\left( {M,N,K,\Delta b} \right) \le {R_U}\left( M,N,K \right). \label{7}
\end{equation}
According to (\ref{syn_load}), (\ref{5})-(\ref{6}), and (\ref{7}),  we have
\begin{small}
\begin{equation}
\begin{split}
R_A\left( {M,N,K,\Delta b} \right)
   &  \le \min \left\{ {\left\lceil {B/\Delta b} \right\rceil  \cdot {R_S}\left( M,N,K \right),{R_U}\left( M,N,K \right)} \right\}       \\
   &  = FK(1 - \frac{M}{N})\min \left\{ {\left\lceil {\frac{B}{{\Delta b}}} \right\rceil \frac{N}{{KM}}\left( {1 - {{\left( {1 - {M \mathord{\left/
 {\vphantom {M N}} \right.
 \kern-\nulldelimiterspace} N}} \right)}^K}} \right),1} \right\}. \label{8}
 \end{split}
\end{equation}
\end{small}

According to (\ref{AS}) and (\ref{8}), we can readily obtain (\ref{theo2}).

This completes the proof.

  \end{appendices}

\end{document}